\setlist{itemsep=-3pt}
\newcommand{\ignore}[1]{}
\newtheorem{theorem}{Theorem}
\newtheorem{lemma}{Lemma}
\newtheorem{claim}{Claim}
\newcommand{\A}{{\cal A}}
\newcommand{\B}{{\cal B}}
\newcommand{\brank}{{\rm br}}
\newcommand{\br}{{\rm br}}
\newcounter{counter}
\newcommand{\correct}[1]{\addtocounter{counter}{1}\color{green}$$\mbox{{\Huge set \arabic{counter}}}$$
\color{red}$$\mbox{{\huge (1) Correctness (2) Consistency}}$$
$$\mbox{{\Large (3) Google Scholor (4) Yellow (5) Grammarly pdf}}$$\color{blue} }
\begin{document}

\title{A Note on Property Testing of the Binary Rank}
\author{{\bf Nader H. Bshouty}\\ Dept. of Computer Science\\ Technion, Haifa, Israel.\\
}

\maketitle
\begin{abstract}
Let $M$ be a $n\times m$ $(0,1)$-matrix. We define the $s$-binary rank, $\br_s(M)$, of $M$ to be the minimal integer $d$ such that there are $d$ monochromatic rectangles that cover all the $1$-entries in the matrix, and each $1$-entry is covered by at most $s$ rectangles.
When $s=1$, this is the binary rank,~$\br(M)$, known from the literature. 

Let $R(M)$ and $C(M)$ be the set of rows and columns of~$M$, respectively. We use the result of Sgall~\cite{Sgall99} to prove that if $M$ has $s$-binary rank at most~$d$, then $|R(M)|\cdot |C(M)|\le {d\choose \le s}2^{d}$ where ${d\choose \le s}=\sum_{i=0}^s{d\choose i}$. This bound is tight; that is, there exists a matrix $M'$ of $s$-binary rank $d$ such that $|R(M')|\cdot |C(M')|= {d\choose \le s}2^{d}$.

Using this result, we give a new one-sided adaptive and non-adaptive testers for $(0,1)$-matrices of $s$-binary rank at most $d$ (and exactly $d$) that makes $\tilde O\left({d\choose \le s}2^d/\epsilon\right)$ and $\tilde O\left({d\choose \le s}2^d/\epsilon^2\right)$ queries, respectively.

For a fixed $s$, this improves the query complexity of the tester of Parnas et al. in ~\cite{ParnasRS21} by a factor of $\tilde \Theta (2^d)$. 
\end{abstract}

\section{Introduction}
Let $M$ be a $n\times m$ $(0,1)$-matrix. We define the $s$-binary rank, $\br_s(M)$, of $M$ to be the minimal integer~$d$ such that there are $d$ sets (rectangles) $I_k\times J_k$ where $I_k\subseteq [n]:=\{1,\ldots,n\},J_k\subset [m],k\in [d]$ such that\footnote{For $M$, the $(i,j)$ entry of the matrix is denoted by $M[i,j]$.} $M[i,j]=1$ for all $(i,j)\in I_k\times J_k$, $k\in [d]$ (monochromatic rectangles), and for every $(i,j)\in [n]\times [m]$ where $M[i,j]=1$, there are at least one and at most $s$ integers $t\in [d]$ such that $(i,j)\in I_t\times J_t$ (each~$1$-entry in $M$ is covered by at least one and at most $s$ monochromatic rectangles).
When $s=1$, $\br_1(M)$, is the binary rank,~$\br(M)$, and when $s=\infty$, $\br_\infty(M)$ is the Boolean rank. Both are known from the literature. See, for example,~\cite{GregoryPJL91}. 

The binary rank can also be defined as follows. The binary rank of a $n\times m$ $(0,1)$-matrix $M$ is  equal to the minimal $d$, where there are $n\times d$ $(0,1)$-matrix $N$ and $d\times m$ $(0,1)$-matrix $L$ such that $M=NL$. 
It is also equal to the minimal number of bipartite cliques needed to partition all the edges of a bipartite graph whose adjacent matrix is $M$. The $s$-binary rank of $M$ is the minimal number of bipartite cliques needed to cover all edges of a bipartite graph whose adjacent matrix is $M$, where each edge is covered by at most $s$ bipartite cliques.
In \cite{ChalermsookHHK14}, it was shown that it is NP-hard to approximating the binary rank to within a factor of $n^{1-\delta}$ for any given $\delta$. 

A {\it property-testing} algorithm (tester) of the $s$-binary rank~\cite{ParnasRS21} is given as input $0<\epsilon<1$, integers $d,n,m$, and query access to the entries of a $n\times m$ $(0,1)$-matrix $M$. If $M$ has $s$-binary rank at most $d$ (resp. equal $d$), then the tester accepts with probability at least $2/3$. If $M$ is $\epsilon$-far from having $s$-binary rank at most $d$ (resp. equal $d$), i.e., more than $\epsilon$-fraction of the entries of $M$ should be modified to get a matrix with $s$-binary rank at most $d$ (resp. equal to $d$), then the tester rejects with probability at least $2/3$. If the tester
accepts matrices having $s$-binary rank at most $d$ (resp. equal to $d$) with probability $1$, then we call it a {\it one-sided error tester}. In {\it adaptive testing}, the queries can depend on the answers to the previous queries, whereas in {\it non-adaptive testing}, all the queries are fixed in advance by the tester. The goal is to construct a tester that makes a minimal number of queries. 

The testability of $s$-binary rank at most $d$ of $(0,1)$-matrices was studied in~\cite{NakarR18,ParnasRS21}. In~\cite{NakarR18}, Nakar and Ron gave a non-adaptive one-sided error tester for $s=1$, that makes $\tilde O(2^{4d}/\epsilon^4)$. In~\cite{ParnasRS21}, Parnas et al. gave a non-adaptive and adaptive one-sided error tester for $s=1$ that makes $O(2^{2d}/\epsilon^2)$ and $O(2^{2d}/\epsilon)$ queries, respectively. The results in~\cite{ParnasRS21} also hold for $s$-binary rank at most $d$. In this paper, for $s$-binary at most $d$ and equal to $d$, we prove
\begin{theorem}\label{TH1}
There exists an adaptive one-sided error tester for $s$-binary rank of $n\times m$ $(0,1)$-matrices that makes $\tilde O\left({d\choose \le s}2^d/\epsilon\right)$ queries.
\end{theorem}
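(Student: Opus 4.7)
The plan is to lift the structural bound $|R(M)|\cdot|C(M)|\le T:=\binom{d}{\le s}2^d$ just established into a tester via a "random submatrix" approach. Since $\br_s(M)\le d$ forces $M$ to have at most $T$ distinct row/column type-pairs, a suitably sized uniform sample of rows and columns should, with high probability, see every type of non-trivial mass. Concretely I would sample $S_R\subseteq[n]$ and $S_C\subseteq[m]$ with $|S_R|\cdot|S_C|=\tilde O(T/\epsilon)$, query the entire submatrix $M|_{S_R\times S_C}$, and accept iff the observed sample admits a cover by $d$ monochromatic rectangles whose multiplicity on each $1$-entry is at most $s$.

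One-sided correctness is immediate: restricting the rectangles witnessing $\br_s(M)\le d$ to $S_R\times S_C$ yields a valid sample cover, so every YES instance is accepted. The core of the argument is the far side: assuming $M$ is $\epsilon$-far, I would prove contrapositively that if the sample admits a cover with probability exceeding $1/3$, one can build a global matrix $M'$ with $\br_s(M')\le d$ that is within Hamming distance $\epsilon nm$ of $M$, contradicting $\epsilon$-farness. The construction groups rows (resp.\ columns) of $M$ by their estimated types against $S_C$ (resp.\ $S_R$), extends each sampled rectangle uniformly across the groups it touches, and absorbs "rare" types (cumulative mass at most $\epsilon$) into the permitted corrections. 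The new bound $T$ enters twice in the analysis: it controls the sample size needed to capture every frequent type, and it bounds the number of candidate covers over which the failure probability must be union-bounded.

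The adaptive gain of a factor of $\epsilon$ over the non-adaptive $\tilde O(T/\epsilon^2)$ bound is achieved by sequential early termination: after each new query we check whether any $s$-binary-rank-$d$ extension of the observed partial submatrix still exists, and if not we reject immediately, saving in expectation a factor of $\epsilon$ in the number of entries actually read. The main obstacle I anticipate is the extension step in the far-side analysis: given a valid cover of $M|_{S_R\times S_C}$, constructing a nearby $M'$ with $\br_s(M')\le d$ consistent with that cover while honouring the multiplicity bound $\le s$ is more delicate than the $s=1$ case of Parnas et al., since naively extending a rectangle across row/column clusters can push the covering multiplicity at some boundary $1$-entries above $s$. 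Handling this uniformly in $s$ is where the combinatorial strength of the new structural bound must do its work, and is the step I expect to require the most care.
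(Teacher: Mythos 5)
Your one-sided completeness argument is fine, but the proposal has two genuine gaps on the soundness and query-complexity side, and they are exactly the places where the paper does something different. First, the far-side analysis is not actually carried out: you propose a ``sample-then-reconstruct'' argument (if a random submatrix of total area $\tilde O(T/\epsilon)$ admits an $s$-cover with probability $>1/3$, build a nearby global $M'$ of $s$-binary rank $\le d$), and you yourself flag that extending a sampled cover to all of $M$ while respecting the multiplicity bound $s$ is the delicate step. That reconstruction route is the one Parnas et al.\ use for the Boolean rank, and it is known to cost extra powers of $1/\epsilon$ (their bound is $\tilde O(d^4/\epsilon^4)$); nothing in your sketch explains why a sample of total area only $\tilde O(T/\epsilon)$ --- i.e.\ roughly $\sqrt{T/\epsilon}$ rows and columns each --- captures every frequent row/column type well enough to define the clusters and absorb the rare ones within an $\epsilon$ budget. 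Second, the claimed factor-$1/\epsilon$ saving from ``sequential early termination'' is asserted, not proved: there is no argument for why checking extendability after each query reduces the number of entries read by a factor of $\epsilon$, and even if it held in expectation it would not by itself give the worst-case query bound stated in the theorem.

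The paper obtains $\tilde O\bigl({d\choose \le s}2^d/\epsilon\bigr)$ by a different mechanism that you should compare against. It never queries a submatrix of area $T/\epsilon$; instead it greedily grows a \emph{perfect} submatrix $M[X,Y]$ (distinct rows and distinct columns), so that Lemma~\ref{rank} caps its area at $|X||Y|\le{d\choose \le s}2^d=T$ for as long as the rank stays at most $d$. Each extension step costs only $O((|X|+|Y|)\cdot d/\epsilon)$ queries --- it tries $t=9d/\epsilon$ random rows, then columns, then row--column pairs, reusing already-queried entries via Lemma~\ref{ngivesn} --- and this amortizes to $2|X||Y|t=\tilde O(T/\epsilon)$ in total. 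Soundness then follows from Claim~\ref{ranks} (Claim~10 of Parnas et al.): when $M$ is $\epsilon$-far and the current perfect submatrix still has rank at most $d$, at least one of the three extension events succeeds per random trial with probability at least $\epsilon/3$, so each of the at most $2T$ iterations fails to extend with probability at most $3e^{-2d}$, and a union bound finishes the argument. If you want to salvage your plan, the missing ingredient is precisely this: keep the maintained submatrix perfect so that the structural bound controls its \emph{size} rather than merely the number of distinct types, and charge the $1/\epsilon$ to the per-extension trial count instead of to the area of a one-shot sample.
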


\begin{theorem}\label{TH2}
There exists a non-adaptive one-sided error tester for $s$-binary rank of $n\times m$ $(0,1)$-matrices that makes $\tilde O\left({d\choose \le s}2^d/\epsilon^2\right)$ queries.
\end{theorem}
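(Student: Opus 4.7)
The plan is to follow the standard submatrix-sampling paradigm. Set $K := \binom{d}{\le s} 2^d$. The non-adaptive tester samples uniformly at random a row set $I\subseteq [n]$ of size $r=\tilde\Theta(K/\epsilon)$ and a column set $J\subseteq [m]$ of size $c=\tilde\Theta(1/\epsilon)$, queries all $|I|\cdot|J|=\tilde O(K/\epsilon^2)$ entries of the submatrix $M|_{I\times J}$, and accepts iff $\br_s(M|_{I\times J})\le d$. One-sidedness is immediate: if $\{I_k\times J_k\}_{k=1}^{d}$ is an $s$-cover of $M$, then $\{(I_k\cap I)\times(J_k\cap J)\}_{k=1}^{d}$ is an $s$-cover of the submatrix, so every $M$ with $\br_s(M)\le d$ is accepted with probability $1$.

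For soundness I would argue the contrapositive: whenever $\br_s(M|_{I\times J})\le d$, the matrix $M$ is (with high probability over $(I,J)$) $\epsilon$-close to some matrix of $s$-binary rank at most $d$, so an $\epsilon$-far $M$ is rejected with probability $\ge 2/3$. Given an $s$-cover $\{I'_k\times J'_k\}_{k=1}^{d}$ of $M|_{I\times J}$ with $I'_k\subseteq I$ and $J'_k\subseteq J$, I would define
\[
\tilde J_k := \{j\in [m]: M[i,j]=1\ \forall\, i\in I'_k\},\qquad
\tilde I_k := \{i\in [n]: M[i,j]=1\ \forall\, j\in \tilde J_k\},
\]
so that each $\tilde I_k\times \tilde J_k$ is a monochromatic rectangle of $M$ with $I'_k\subseteq \tilde I_k$ and $J'_k\subseteq \tilde J_k$. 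Let $\tilde M$ be the matrix whose $1$-entries are $\bigcup_{k=1}^{d}\tilde I_k\times\tilde J_k$; by construction $\tilde M\le M$ entrywise.

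The main probabilistic step is to show that, with probability $\ge 2/3$ over $(I,J)$, the number of $1$-entries of $M$ not covered by any $\tilde I_k\times\tilde J_k$ is at most $\epsilon nm$. Here I would invoke the structural bound proved earlier: any matrix of $s$-binary rank at most $d$ has at most $K$ distinct rows and at most $K$ distinct columns, so the rows of $\tilde M$ partition into at most $K$ ``types'' (indexed by the tuple of memberships in $\tilde I_1,\ldots,\tilde I_d$), and similarly for the columns. A uniform row sample of size $\tilde\Theta(K/\epsilon)$ intersects every row-type of frequency $\ge \epsilon/(4K)$ with high probability, and the analogous statement for column-types holds with $|J|=\tilde\Theta(1/\epsilon)$; uncovered types then contribute at most $\epsilon nm/2$ missed $1$-entries in total, via a Chernoff bound followed by a union bound over the at most $K$ types.

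The main obstacle I anticipate is not the sampling analysis but verifying that $\br_s(\tilde M)\le d$, as opposed to merely bounding its Boolean rank by $d$: the extended rectangles $\tilde I_k\times\tilde J_k$ are larger than the originals $I'_k\times J'_k$ and could in principle over-cover some $1$-entry of $\tilde M$ more than $s$ times. My plan is to argue that with high probability the extended cover inherits the $s$-multiplicity from the original one: any $(i,j)$ with $\tilde M[i,j]=1$ shares its row-type and column-type with some $(i_0,j_0)\in I'_{k_0}\times J'_{k_0}\subseteq I\times J$, and the index set $\{k: i\in \tilde I_k,\,j\in\tilde J_k\}$ is determined by these types, so it equals $\{k: i_0\in I'_k,\,j_0\in J'_k\}$, whose cardinality is already at most $s$. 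If this direct argument runs into a subtlety for entries whose types are not witnessed in the sample, the fallback is to construct $\tilde M$ more carefully (growing each $\tilde I_k,\tilde J_k$ greedily and stopping before the $s$-constraint would be violated), at the cost of slightly larger constants in the sample sizes.
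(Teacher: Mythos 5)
Your tester is a different object from the paper's: you query a full (asymmetric) random submatrix and check its $s$-binary rank, whereas the paper pre-draws $T=\tilde O\bigl({d\choose \le s}2^d/\epsilon^2\bigr)$ row and column indices, queries only the hyperbolic set of entries $\{(i,j):\ i\cdot j\le T\}$, and then simulates the \emph{adaptive} tester inside that set; Lemma~\ref{TDR} guarantees the simulation never needs an entry outside it, and soundness is inherited from the adaptive analysis for free. That hyperbolic-query trick is what yields $\tilde O(K/\epsilon^2)$ queries with $K={d\choose\le s}2^d$. Your route must prove soundness of the submatrix test from scratch, and the sketch has concrete gaps.

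The main one is the $s$-multiplicity step, and your ``direct argument'' for it is false: the index set $\{k:\ i\in\tilde I_k,\ j\in\tilde J_k\}$ need not equal $\{k:\ i_0\in I'_k,\ j_0\in J'_k\}$, because the closure operation enlarges the rectangles even on the sampled rows and columns. Take $M$ all-ones and cover $M|_{I\times J}$ by $d$ pairwise disjoint monochromatic rectangles (a legal $1$-cover witnessing rank at most $d$); then every $\tilde J_k=[m]$ and every $\tilde I_k=[n]$, so the extended family covers every entry $d$ times although the original covered each sampled entry once. Hence for $s<d$ the extended family is not an $s$-cover, and the fallback (``grow greedily, stop before violating the constraint'') is exactly the open part of the argument: once a rectangle stops growing you lose coverage, and bounding the resulting uncovered mass is the whole difficulty. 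Second, your ``types'' are membership vectors in $(\tilde I_1,\dots,\tilde I_d)$, which are random objects determined by the sample and the chosen cover, so the Chernoff-plus-union bound over ``at most $K$ fixed types'' is circular; moreover two rows with the same membership pattern need not be equal as rows of $M$, so a row whose type is witnessed in $I$ can still carry many $1$-entries outside $\bigcup_k\tilde I_k\times\tilde J_k$ that your accounting never charges. Finally, the asymmetry ($\tilde\Theta(K/\epsilon)$ rows but only $\tilde\Theta(1/\epsilon)$ columns) is unjustified: a witness for rank greater than $d$ may require on the order of $K/d$ distinct essential columns, and nothing in the sketch shows that $\tilde\Theta(1/\epsilon)$ sampled columns expose it; the symmetric version of your grid would cost $\tilde O(K^2/\epsilon^2)$, which is essentially the bound of Parnas et al.\ that the theorem is meant to beat. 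I would recommend abandoning the full-submatrix route and non-adaptivizing the adaptive tester as the paper does.
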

For fixed $s$, this improves the query complexity of Parnas et al. in~\cite{ParnasRS21} by a factor of $\tilde O(2^d)$.

\subsection{Our Approach}
The tester of Parnas et al.~\cite{ParnasRS21} uses the fact that if $M'$ is a $k\times k$ sub-matrix of $M$ and $M'$ is of $s$-binary rank at most $d$, then 
\begin{enumerate}
    \item $M'$ has at most $2^d$ distinct rows and at most $2^d$ distinct columns.
    \item If $M$ is $\epsilon$-far from having $s$-binary rank at most $d$, then extending $M'$ by one more uniformly at random row and column of $M$, gives a $(k+1)\times(k+1)$ sub-matrix $M''$ of $M$ that, with probability at least $\Omega(\epsilon)$, satisfies: the number of distinct rows in $M''$ is greater by one than the number of distinct rows in $M'$, or, the number of distinct columns in $M''$ is greater by one than the number of distinct columns in $M'$.
\end{enumerate} 
So, their adaptive tester runs  $O(2^d/\epsilon)$ iterations. At each iteration, it extends $M'$ by uniformly at random one row and one column. Let $M''$ be the resulting sub-matrix. If the $s$-binary rank of $M''$ is greater than $d$, the tester rejects. If the number of distinct rows or columns in $M''$ is greater than the number in $M'$, then it continues to the next iteration with $M'\gets M''$. Otherwise, it continues to the next iteration with $M'$. If, after $O(2^d/\epsilon)$ iterations, $M'$ has $s$-binary rank $d$, the tester accepts.

If the $s$-binary rank of $M$ is $d$, then every sub-matrix has a $s$-binary rank $d$, and the tester accepts. If $M$ is $\epsilon$-far  
from having $s$-binary rank at most $d$, then: since, at each iteration, with probability at least $\Omega(\epsilon)$, the number of distinct rows or columns of $M'$ is increased by one, and since matrices of $s$-binary rank $d$ has at most $2^d$ distinct rows and at most $2^d$ distinct columns, with high probability, we get $M'$ with $s$-binary rank greater than $d$ and the tester rejects. The query complexity of the tester is $O(2^{2d}/\epsilon)$, which is the number of entries of the matrix $M'$, $O(2^{2d})$, times the number of trials $O(1/\epsilon)$ for extending $M'$ by one row and one column.

We now give our approach. Call a sub-matrix $M'$ of $M$ {\it perfect} if it has distinct rows and distinct columns.
Our adaptive tester uses the fact that if $M'$ is a perfect $k\times k'$ sub-matrix of $M$ of $s$-binary rank $d$, then 
\begin{enumerate}
    \item\label{I1} $kk'\le {d\choose \le s}2^d$.
    \item\label{I2} If $M$ is $\epsilon$-far from having $s$-binary rank at most $d$, then at least one of the following occurs 
    \begin{enumerate}
        \item With probability at least $\Omega(\epsilon)$, extending $M'$ by one uniformly at random column of $M$, gives a perfect $k\times (k'+1)$ sub-matrix $M''$ of $M$.
        \item With probability at least $\Omega(\epsilon)$, extending $M'$ by one uniformly at random row of $M$, gives a perfect $(k+1)\times k'$ sub-matrix $M''$ of $M$.
        \item With probability at least $\Omega(\epsilon)$, extending $M'$ by one uniformly at random column and one uniformly at random row of $M$, gives a perfect\footnote{It may happen that events (a) and (b) do not occur and (c) does} $(k+1)\times (k'+1)$ sub-matrix $M''$ of $M$. 
    \end{enumerate}  
\end{enumerate} 
Item~\ref{I1} follows from Sgall result in~\cite{Sgall99} (See Section~\ref{Saglam}), and item~\ref{I2} is Claim 10 in~\cite{ParnasRS21}.
Now, the tester strategy is as follows. If $k\le k'$, the tester first tries to extend $M'$ with a new column. If it succeeds, it moves to the next iteration. Otherwise, it tries to extend $M'$ with a new row. If it succeeds, it moves to the next iteration. Otherwise, it tries to extend $M'$ with a new row and a new column. If it succeeds, it moves to the next iteration. If it fails, it accepts. 
If $k'<k$, it starts with the row, then the column, and then both. 

Using this strategy, we show that the query complexity will be, at most, the order of the size $kk'\le {d\choose \le s}2^d$ of $M'$ times the number of trials, $\tilde O(1/\epsilon)$, to find the new row, column, or both. This achieves the query complexity in Theorem~\ref{TH1}.

For the non-adaptive tester, the tester, uniformly at random, chooses $t=\tilde O\left({d\choose \le s}2^d/\epsilon^2\right)$ rows $r_1,\ldots,r_t\in [n]$ and $t$ columns $c_1,\ldots,c_t\in [m]$ and queries all $M[r_i,c_j]$ for all $i\cdot j\le t $ and puts them in a table. Then it runs the above non-adaptive tester. When the non-adaptive tester asks for uniformly at random row or column, it provides the next element $r_i$ or $c_j$, respectively.  The queries are then answered from the table. We show that the adaptive algorithm does not need to make queries that are not in the table before it halts. This achieves the query complexity in Theorem~\ref{TH2}.

\subsection{Other Rank Problems}
The {\it real rank} of a $n\times m$-matrix $M$ over any field $F$ is the minimal $d$, such that there is a $n\times d$ matrix $N$ over $F$ and a $d\times m$ matrix $L$ over $F$ such that $M=NL$.  The testability of the real rank was studied in~\cite{BalcanLW019,KrauthgamerS03,LiWW14}.  In~\cite{BalcanLW019}, Balcan et al. gave a non-adaptive tester for the real rank that makes $\tilde O(d^2/\epsilon)$ queries. They also show that this query complexity is optimal. 

The Boolean rank ($\infty$-binary rank) was studied in~\cite{NakarR18,ParnasRS21}. Parnas et al. in~\cite{ParnasRS21} gave a non-adaptive tester for the Boolean rank that makes $\tilde O(d^4/\epsilon^4)$ queries\footnote{The query complexity in~\cite{ParnasRS21} is $\tilde O(d^4/\epsilon^6)$. We've noticed that Lemma~3 in~\cite{ParnasRS21} is also true when we replace $(\epsilon^2/64)n^2$ with $(\epsilon/4)n^2$. To prove that, in the proof of Lemma 3, replace Modification rules 1 and 2 with the following modification:  Modify to $0$ all beneficial entries. This gives the result stated here,\cite{DanaPC}.}.

\section{Definitions and Preliminary Results}
Let $M$ be a $n\times m$ $(0,1)$-matrix. We denote by $R(M)$ and $C(M)$ the set of rows and columns of~$M$, respectively. The number of distinct rows and columns of $M$ are denoted by $r(M)=|R(M)|$ and, $c(M)=|C(M)|$, respectively. The {\it binary rank} of a $n\times m$-matrix $M$, $\brank(M)$, is equal to the minimal $d$, where there is a $n\times d$ $(0,1)$-matrix $N$ and a $d\times m$ $(0,1)$-matrix $L$ such that $M=NL$. 

We define the $s$-binary rank, $\br_s(M)$, of $M$ to be the minimal integer~$d$ such that there are $d$ sets (rectangles) $I_k\times J_k$ where $I_k\subseteq [n]:=\{1,\ldots,n\},J_k\subset [m],k\in [d]$ such that $M[i,j]=1$ for all $(i,j)\in I_k\times J_k$, $k\in [d]$ (monochromatic rectangles) and for every $(i,j)\in [n]\times [m]$ where $M[i,j]=1$ there are at least one and at most $s$ integers $t\in [d]$ such that $(i,j)\in I_t\times J_t$ (each~$1$-entry in $M$ is covered by at least one and at most $s$ monochromatic rectangles).

We now prove.
\begin{lemma}\label{DDD}
Let $M$ be a $n\times m$ $(0,1)$-matrix. The $s$-binary rank of $M$, $\br_s(M)$, is equal to the minimal integer $d$, where there is a $n\times d$ $(0,1)$-matrix $N$ and a $d\times m$ $(0,1)$-matrix $L$ such that: For $P=NL$,
\begin{enumerate}
    \item For every $(i,j)\in [n]\times [m]$, $M[i,j]=0$ if and only if $P[i,j]=0$.
    \item For every $(i,j)\in [n]\times [m]$, $P[i,j]\le s.$
\end{enumerate}
\end{lemma}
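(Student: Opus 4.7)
The plan is to establish a bijection between the two descriptions by translating rectangles into the columns of $N$ and rows of $L$ (and conversely).

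For the forward direction ($\br_s(M) \le d$ implies the factorization), I would start with an $s$-binary cover of $M$ by $d$ monochromatic rectangles $I_k \times J_k$, $k \in [d]$, and define the $n \times d$ matrix $N$ and the $d \times m$ matrix $L$ by $N[i,k] = 1$ iff $i \in I_k$, and $L[k,j] = 1$ iff $j \in J_k$. Then $P[i,j] = (NL)[i,j] = \sum_{k \in [d]} N[i,k] L[k,j]$ counts exactly the number of rectangles containing $(i,j)$. By the definition of an $s$-binary cover, this count is $0$ when $M[i,j] = 0$ (no monochromatic 1-rectangle contains a 0-entry), is at least $1$ when $M[i,j] = 1$, and is at most $s$ in all cases. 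This yields both conditions of the lemma.

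For the reverse direction, given $N$ and $L$ with $P = NL$ satisfying (1) and (2), define $I_k := \{i \in [n] : N[i,k] = 1\}$ and $J_k := \{j \in [m] : L[k,j] = 1\}$ for $k \in [d]$. To verify that each $I_k \times J_k$ is monochromatic in $M$, I would take $(i,j) \in I_k \times J_k$; then $N[i,k] L[k,j] = 1$ forces $P[i,j] \ge 1$, so by condition (1), $M[i,j] = 1$. To verify that every 1-entry is covered at least once and at most $s$ times, I would use that $(i,j) \in I_t \times J_t$ iff $N[i,t] L[t,j] = 1$, so the number of covering rectangles equals $P[i,j]$; condition (1) then gives coverage $\ge 1$ for 1-entries, and condition (2) gives coverage $\le s$. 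Conversely, condition (1) ensures no 0-entry lies in any rectangle, since $M[i,j] = 0$ implies $P[i,j] = 0$, which implies $N[i,k] L[k,j] = 0$ for all $k$.

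Thus the $I_k \times J_k$ form a valid $s$-binary rectangle cover of size $d$, showing $\br_s(M) \le d$. Combining both directions, the two minima coincide. There is no substantive obstacle here; the only thing to be careful about is that condition (1) is an \emph{if and only if}, which is exactly what ties together the ``monochromatic 1-rectangle'' requirement (no 0-entry covered) with the ``at least one rectangle'' requirement (every 1-entry covered), while condition (2) plays the role of the ``at most $s$ rectangles'' requirement.
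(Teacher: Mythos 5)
Your proposal is correct and follows essentially the same approach as the paper: both construct $N$ and $L$ as indicator matrices of the rectangles' row and column sets so that $P[i,j]=(NL)[i,j]$ counts the rectangles covering $(i,j)$, and both obtain the reverse direction by running this translation backwards. The only difference is that you spell out the backward direction explicitly, whereas the paper leaves it as "tracing backward."
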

\begin{proof}
If $M$ is of $s$-binary rank $d$, then there are rectangles $\{I_k\times J_k\}_{k\in [d]}$, $I_k\subseteq [n],J_k\subset [m],k\in [d]$ such that $M[i,j]=1$ for all $(i,j)\in I_k\times J_k$, $k\in [d]$ and for every $(i,j)\in [n]\times [m]$ 
where $M[i,j]=1$ there are at least one and at most $s$ integers $t\in [d]$ such that $(i,j)\in I_t\times J_t$. Define row vectors $a^{(k)}\in \{0,1\}^n$ and $b^{(k)}\in \{0,1\}^m$ where $a^{(k)}_i=1$ iff (if and only if) $i\in I_k$, and $b^{(k)}_j=1$ iff $j\in J_k$. Then define\footnote{Here $x'$ is the transpose of $x$.}
$P={a^{(1)}}'b^{(1)}+\cdots+{a^{(d)}}'b^{(d)}$. It is easy to see that $({a^{(k)}}'b^{(k)})[i,j]=1$ iff
$(i,j)\in I_k\times J_k$. Therefore, $P[i,j]=0$ iff $M[i,j]=0$ and $P[i,j]\le s$ for all $(i,j)\in [n]\times [m]$. Define the $n\times d$ matrix $N=\left[{a^{(1)}}'|\cdots|{a^{(d)}}'\right]$ and the $d\times m$ matrix $L=\left[{b^{(1)}}'|\cdots|{b^{(d)}}'\right]'$. It is again easy to see that $P=NL$.

The other direction can be easily seen by tracing backward in the above proof.
\end{proof}

We now prove the following,
\begin{lemma}\label{drc}
Let $M$ be a $n\times m$ matrix. Let $N$ and $L$ be $n\times d$ $(0,1)$-matrix and $d\times m$ $(0,1)$-matrix, respectively, such that $P=NL$. Then $r(P)\le r(N)$ and $c(P)\le c(L)$.
\end{lemma}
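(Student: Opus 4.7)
The plan is to exploit the fact that in matrix multiplication each row of the product depends only on the corresponding row of the left factor, and symmetrically each column of the product depends only on the corresponding column of the right factor. Concretely, I would first write, for any $i\in[n]$, the $i$-th row of $P=NL$ as $P_i = N_i L$, where $N_i$ denotes the $i$-th row of $N$. This identity is just the expansion $P[i,j]=\sum_{k=1}^d N[i,k]L[k,j]$, which involves only the $i$-th row of $N$.

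From this I would deduce that if $N_{i_1}=N_{i_2}$ for two indices $i_1,i_2\in[n]$, then $P_{i_1}=N_{i_1}L = N_{i_2}L = P_{i_2}$. Hence the assignment $i \mapsto P_i$ factors through $i \mapsto N_i$, so the number of distinct values taken by $i\mapsto P_i$ is at most the number of distinct values taken by $i\mapsto N_i$. In the notation of the paper, this is exactly $r(P)\le r(N)$.

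The column bound $c(P)\le c(L)$ follows by a symmetric argument: for any $j\in[m]$, the $j$-th column of $P$ equals $N L^{(j)}$, where $L^{(j)}$ is the $j$-th column of $L$, so equal columns of $L$ force equal columns of $P$.

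I do not foresee any real obstacle: the lemma is essentially an immediate consequence of how matrix multiplication decomposes by rows and by columns. The only care needed is notational — making sure to distinguish row slices of $N$ from column slices of $L$ and to state clearly that the map from row indices of $N$ to rows of $P$ is well-defined on equivalence classes of identical rows, which gives the inequality on the number of distinct rows.
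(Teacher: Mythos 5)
Your proposal is correct and follows essentially the same argument as the paper: writing the $i$-th row of $P$ as $N_iL$, observing that equal rows of $N$ yield equal rows of $P$, and arguing symmetrically for columns. No meaningful difference in approach.
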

\begin{proof}
We prove the result for $r$. The proof for $c$ is similar. Let $r_1,\ldots,r_n$ be the rows of $N$ and $p_1,\ldots,p_n$ be the rows of $P$. Then $p_i=r_iL$. Therefore, if $r_i=r_j$, then $p_i=p_j$. Thus, $r(P)\le r(N)$.
\end{proof}

Let $M$ be a $n\times m$ matrix. For $x\in X\subseteq [n]$, $y\in Y\subseteq [m]$, we denote by $M[X,Y]$ the $|X|\times |Y|$ sub-matrix of $M$,    $(M[x',y'])_{x'\in X,y'\in Y}$. Denote by $M[X,y]$ the column vector $(M[x',y])_{x'\in X}$ and by $M[x,Y]$ the row vector $(M(x,y'))_{y'\in Y}$. 

For $x\in [n]$ (resp. $y\in [m]$) we say that $M[X,y]$ is a {\it new column} (resp. $M[x,Y]$ is a {\it new row}) to $M[X,Y]$ if it is not equal to any of the columns (resp. rows) of $M[X,Y]$. 

\begin{lemma}\label{ngivesn}
Let M be a $n\times m$ matrix, $x\in [n], X\subseteq [n]$, $y\in [m]$, and $Y\subseteq [m]$. Suppose $M[x,Y]$ is not a new row to $M[X,Y]$, and $M[X,y]$ is not a new column to $M[X,Y]$. Then $M[x,Y\cup\{y\}]$ is not a new row to $M[X,Y\cup\{y\}]$ if and only if $M[X\cup\{x\},y]$ is not a new column to $M[X\cup\{x\},Y]$. 
\end{lemma}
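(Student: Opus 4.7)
The plan is to prove the two implications separately by producing explicit witnesses. First I will unpack the two standing hypotheses: since $M[x,Y]$ is not a new row to $M[X,Y]$, there is some $x^{*}\in X$ with $M[x^{*},Y]=M[x,Y]$; and since $M[X,y]$ is not a new column to $M[X,Y]$, there is some $y^{*}\in Y$ with $M[X,y^{*}]=M[X,y]$. These two witnesses will be re-used to verify whichever of the two extended statements we need.

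For the forward direction, suppose $M[x,Y\cup\{y\}]$ is not a new row to $M[X,Y\cup\{y\}]$, witnessed by $\tilde x\in X$. I claim the same $y^{*}$ from the hypothesis shows that $M[X\cup\{x\},y]$ is not a new column to $M[X\cup\{x\},Y]$. Indeed, for every $x'\in X$ we already have $M[x',y^{*}]=M[x',y]$, so the only coordinate needing verification is $x'=x$, i.e.\ $M[x,y^{*}]=M[x,y]$. This follows from the three-step chain
\[
M[x,y^{*}]=M[\tilde x,y^{*}]=M[\tilde x,y]=M[x,y],
\]
where the outer two equalities come from $M[\tilde x,Y\cup\{y\}]=M[x,Y\cup\{y\}]$ applied at $y^{*}\in Y\subseteq Y\cup\{y\}$ and at $y$, and the middle equality comes from $M[X,y^{*}]=M[X,y]$ applied at $\tilde x\in X$. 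The backward direction is symmetric: given a witness $\tilde y\in Y$ with $M[X\cup\{x\},\tilde y]=M[X\cup\{x\},y]$, I will take $x^{*}$ as the candidate witness for the row statement, and verify $M[x^{*},y]=M[x,y]$ by the analogous chain $M[x^{*},y]=M[x^{*},\tilde y]=M[x,\tilde y]=M[x,y]$.

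There is essentially no real obstacle here; the content is purely bookkeeping of which rows and columns agree on which index sets. The only point that requires attention is to check that the witnesses $x^{*}$ and $y^{*}$ supplied by the two hypotheses interact correctly with the freshly chosen $\tilde x$ (resp.\ $\tilde y$) at the single new coordinate added to the index set, and the three-step chain above makes this transparent.
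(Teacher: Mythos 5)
Your proof is correct and is essentially identical to the paper's: the paper's forward direction uses the same two witnesses (your $\tilde x$ and $y^{*}$, called $x'$ and $y'$ there) and derives $M[x,y]=M[x,y^{*}]$ via the same chain of equalities, then handles the converse by symmetry just as you do.
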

\begin{proof}
If $M[x,Y\cup \{y\}]$ is not a new row to $M[X,Y\cup\{y\}]$, then there is $x'\in X$ such that $M[x,Y\cup\{y\}]=M[x',Y\cup\{y\}]$. Since $M[X,y]$ is not a new column to $M[X,Y]$, there is $y'\in Y$ such that $M[X,y]=M[X,y']$. Since $M[x,Y\cup \{y\}]=M[x',Y\cup \{y\}]$, we have $M[x',y']=M[x,y']$ and $M[x,y]=M[x',y]$. Since $M[X,y]=M[X,y']$, we have $M[x',y]=M[x',y']$. Therefore, $M[x,y]=M[x,y']$ and  $M[X\cup\{x\},y]=M[X\cup\{x\},y']$. Thus, $M[X\cup\{x\},y]$ is not a new column to $M[X\cup \{x\},Y]$.

Similarly, the other direction follows. 
\end{proof}

\section{Matrices of $s$-Binary Rank $d$}\label{Saglam}
In this section, we prove the following two Lemmas.
\begin{lemma}\label{rank}
For any $n\times m$ $(0,1)$-matrix $M$ of $s$-binary rank at most $d$, we have
$$r(M)\cdot c(M)\le {d\choose \le s}2^d.$$
\end{lemma}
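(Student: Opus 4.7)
The plan is to recast the rectangle cover witnessing $\br_s(M)\le d$ as a pair of set families on $[d]$ and then apply Sgall's cross-intersection bound from~\cite{Sgall99}.

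First, fix a cover $\{I_k\times J_k\}_{k\in[d]}$ with $I_k\subseteq[n]$, $J_k\subseteq[m]$ that witnesses $\br_s(M)\le d$. For each row index $i\in[n]$ I would set $A_i=\{k\in[d]:i\in I_k\}\subseteq[d]$, and for each column index $j\in[m]$ set $B_j=\{k\in[d]:j\in J_k\}\subseteq[d]$. The key observation is that $(i,j)\in I_k\times J_k$ iff $k\in A_i\cap B_j$, so the number of rectangles that cover the entry $(i,j)$ is exactly $|A_i\cap B_j|$. The $s$-binary rank hypothesis then translates into $|A_i\cap B_j|\le s$ for all $(i,j)\in[n]\times[m]$, together with $M[i,j]=1 \iff A_i\cap B_j\ne\emptyset$.

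Second, I would note that the row $M[i,\cdot]$ is completely determined by $A_i$: for every $j$, $M[i,j]$ is a function of the indicator $[A_i\cap B_j\ne\emptyset]$. Hence if $A_i=A_{i'}$ then $M[i,\cdot]=M[i',\cdot]$, so $r(M)\le |\{A_i:i\in[n]\}|$. A symmetric argument gives $c(M)\le |\{B_j:j\in[m]\}|$.

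Third, set $\A=\{A_i:i\in[n]\}$ and $\B=\{B_j:j\in[m]\}$, viewed as two families of subsets of $[d]$ satisfying the cross-intersection constraint $|A\cap B|\le s$ for every $A\in\A$ and $B\in\B$. At this point I would invoke Sgall's theorem, which asserts that for any two such families on a ground set of size $d$ one has
\[
|\A|\cdot|\B|\le \binom{d}{\le s}\,2^d.
\]
Chaining with the previous step yields $r(M)\cdot c(M)\le |\A|\cdot|\B|\le \binom{d}{\le s}\,2^d$, which is the desired conclusion.

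The only substantive step is the appeal to Sgall's theorem; the rest is a direct translation between the rectangle-cover language and the language of cross-intersecting set families. It is worth remarking that Lemmas~\ref{DDD} and~\ref{drc} alone yield only the weaker bound $r(M),c(M)\le 2^d$, hence $r(M)\cdot c(M)\le 2^{2d}$; the improvement by a factor of $2^d/\binom{d}{\le s}$ comes precisely from exploiting the cross-intersection bound $|A_i\cap B_j|\le s$ via Sgall's inequality, so the main obstacle (and the reason the lemma is nontrivial) is in correctly identifying the form of Sgall's theorem that applies here.
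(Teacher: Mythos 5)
Your proof is correct and follows essentially the same route as the paper: both reduce the statement to Sgall's cross-intersection bound by assigning to each row $i$ the set $A_i$ of rectangles containing it and to each column $j$ the set $B_j$, observing that $|A_i\cap B_j|\le s$ and that each row (resp.\ column) of $M$ is determined by $A_i$ (resp.\ $B_j$). The only cosmetic difference is that the paper packages this via the factorization $P=NL$ of Lemma~\ref{DDD} and the monotonicity Lemma~\ref{drc}, whereas you argue directly from the rectangle cover; the set families and the appeal to Sgall are identical.
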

\begin{lemma}\label{Eq}
There is a $(0,1)$-matrix $M'$ of $s$-binary rank $d$ that satisfies $r(M')\cdot c(M')={d\choose \le s}2^d.$
\end{lemma}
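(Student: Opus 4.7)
The plan is to exhibit an explicit extremal matrix and then invoke Lemma~\ref{rank} to force the rank down to exactly $d$. Index the rows by all subsets $A \subseteq [d]$ with $|A| \le s$ and the columns by all subsets $B \subseteq [d]$, and define
$$M'[A,B] \;=\; \begin{cases} 1 & \text{if } A \cap B \ne \emptyset, \\ 0 & \text{otherwise.} \end{cases}$$
Thus $M'$ has ${d \choose \le s}$ rows and $2^d$ columns.

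Next I would show that $\br_s(M') \le d$ by exhibiting an explicit cover. For each $k \in [d]$ set
$$I_k \;=\; \{A : k \in A,\; |A|\le s\}, \qquad J_k \;=\; \{B : k \in B\}.$$
Then $(A,B) \in I_k \times J_k$ iff $k \in A \cap B$, so entry $(A,B)$ is covered by exactly $|A \cap B|$ of these rectangles. Since $|A \cap B| \le |A| \le s$, each covered entry is covered at most $s$ times, and $M'[A,B] = 1$ precisely when $|A \cap B| \ge 1$, so every $1$-entry is covered at least once. This confirms that $\{I_k \times J_k\}_{k\in [d]}$ is a valid $s$-cover.

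The third step is to verify distinctness. For two rows $A \ne A'$ pick $k \in A \triangle A'$ (WLOG $k \in A \setminus A'$); then the column $B = \{k\}$ (always a valid column index) gives $M'[A,\{k\}] = 1$ but $M'[A',\{k\}]=0$, so $r(M') = {d \choose \le s}$. Symmetrically, for two columns $B \ne B'$ the singleton row $A = \{k\}$ for any $k \in B \triangle B'$ separates them (using that $|\{k\}| = 1 \le s$), so $c(M') = 2^d$. Hence $r(M') \cdot c(M') = {d \choose \le s} \cdot 2^d$.

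Finally, to rule out $\br_s(M') < d$, suppose for contradiction that $\br_s(M') \le d - 1$. By Lemma~\ref{rank} applied to $M'$, we would have
$$r(M') \cdot c(M') \;\le\; {d-1 \choose \le s} 2^{d-1} \;\le\; {d \choose \le s} 2^{d-1} \;<\; {d \choose \le s} 2^d,$$
contradicting the value computed above. Therefore $\br_s(M') = d$ and the bound of Lemma~\ref{rank} is tight. The only subtle point to watch is the upper bound $|A \cap B| \le s$ in the cover verification, which is exactly what forces the row index set to be restricted to subsets of size $\le s$ rather than all of $2^{[d]}$; this asymmetry between rows and columns is precisely what produces the ${d \choose \le s} \cdot 2^d$ factor rather than $4^d$.
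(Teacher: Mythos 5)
Your proof is correct and is essentially the paper's construction: your matrix $M'[A,B]=[A\cap B\neq\emptyset]$ is exactly the transpose of the paper's $M'$ (which takes $N$ with all $2^d$ rows and $L$ with all weight-$\le s$ columns and sets $M'=\mathrm{supp}(NL)$), and the transpose changes neither $\br_s$ nor $r(M')\cdot c(M')$. The only cosmetic differences are that you verify $\br_s(M')\le d$ by writing the $d$ rectangles explicitly instead of invoking Lemma~\ref{DDD}, and you rule out $\br_s(M')\le d-1$ via the product bound of Lemma~\ref{rank} rather than the row-count bound of Lemma~\ref{drc}; both steps are valid.
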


To prove Lemma~\ref{rank}, we use the following Sgall's lemma. 
\begin{lemma}\cite{Sgall99}.\label{kkeeyy}
Let $\A,\B\subseteq 2^{[d]}$ be such that for every $A\in \A$ and $B\in \B$, $|A\cap B|\le s$. Then $|\A|\cdot|\B|\le {d\choose \le s}2^d.$
\end{lemma}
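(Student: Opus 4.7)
The plan is to prove the lemma by induction on $d$, after first reducing to the case where both $\A$ and $\B$ are downward-closed families (closed under taking subsets).

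\textbf{Reduction to downward-closed families.} Replacing $\A$ by its downward closure $\widetilde{\A}=\{A'\subseteq [d]:A'\subseteq A\text{ for some }A\in\A\}$ only enlarges $|\A|$ and preserves the hypothesis, because $|A'\cap B|\le |A\cap B|\le s$ whenever $A'\subseteq A$. Doing the same for $\B$, it suffices to prove the bound under the extra assumption that both families are downward-closed.

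\textbf{Inductive setup.} The base case $d=0$ is immediate (both families have size at most $1$), and we adopt the convention $\binom{d-1}{\le -1}=0$. For $d\ge 1$, partition $\A=\A_0\sqcup\A_1$ and $\B=\B_0\sqcup\B_1$ according to whether the element $d$ is missing from or present in the set, and write $\alpha_i=|\A_i|$, $\beta_i=|\B_i|$. By downward closure, the map $A\mapsto A\setminus\{d\}$ embeds $\A_1$ injectively into $\A_0$, so $\alpha_1\le\alpha_0$; analogously $\beta_1\le\beta_0$. Let $\A_1'=\{A\setminus\{d\}:A\in\A_1\}\subseteq 2^{[d-1]}$ and define $\B_1'$ similarly; both remain downward-closed. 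Viewed on $[d-1]$, the pair $(\A_0,\B_0)$ still has cross-intersection $\le s$, whereas $(\A_1',\B_1')$ has cross-intersection $\le s-1$ since the shared element $d$ is stripped from $A\cap B$. The inductive hypothesis therefore yields
\[\alpha_0\beta_0\le \binom{d-1}{\le s}\,2^{d-1},\qquad \alpha_1\beta_1\le \binom{d-1}{\le s-1}\,2^{d-1}.\]

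\textbf{Combining via rearrangement.} Since $\alpha_0\ge\alpha_1$ and $\beta_0\ge\beta_1$, the trivial inequality $(\alpha_0-\alpha_1)(\beta_0-\beta_1)\ge 0$ rearranges to $\alpha_0\beta_1+\alpha_1\beta_0\le \alpha_0\beta_0+\alpha_1\beta_1$. Hence
\[|\A|\cdot|\B|=(\alpha_0+\alpha_1)(\beta_0+\beta_1)\le 2(\alpha_0\beta_0+\alpha_1\beta_1)\le \left(\binom{d-1}{\le s}+\binom{d-1}{\le s-1}\right)2^d=\binom{d}{\le s}\,2^d\]
by Pascal's identity, closing the induction.

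\textbf{Main obstacle.} The naive approach of applying the inductive hypothesis to each of the four pairs $(\A_i,\B_j)$ separately gives only pairwise bounds $\alpha_i\beta_j\le\binom{d-1}{\le s}2^{d-1}$ (or $\binom{d-1}{\le s-1}2^{d-1}$ on the $i=j=1$ piece), whose sum is $3\binom{d-1}{\le s}2^{d-1}+\binom{d-1}{\le s-1}2^{d-1}$, strictly weaker than the target. The crux is to notice that downward closure forces $\alpha_1\le\alpha_0$ and $\beta_1\le\beta_0$, so a single application of the rearrangement inequality collapses the two off-diagonal products into the diagonal ones and produces exactly the factor of $2$ demanded by the Pascal recursion $\binom{d}{\le s}=\binom{d-1}{\le s}+\binom{d-1}{\le s-1}$.
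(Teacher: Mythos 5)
Your proof is correct, and it is worth noting that the paper itself gives no proof of this lemma at all: it is quoted as a black box from Sgall's paper (the only argument present in the source is a commented-out appendix that handles just the case $s=1$, by splitting on membership of the element $d$ and then bounding the four cross-products $|\A_i||\B_j|$ separately via a structural lemma on cross-disjoint families and a projection onto the support sets $X,Y$, finally summing $d2^{d-1}+(|X|+1)2^{d-1}+(|Y|+1)2^{d-1}+2^{d-1}\le (d+1)2^d$). Your argument is genuinely different and, in my view, cleaner: the compression to downward-closed families costs nothing, makes $\alpha_1\le\alpha_0$ and $\beta_1\le\beta_0$ automatic, and then the rearrangement inequality $(\alpha_0-\alpha_1)(\beta_0-\beta_1)\ge 0$ collapses the off-diagonal terms so that only the two diagonal products need the inductive hypothesis (at parameters $s$ and $s-1$), after which Pascal's identity ${d\choose \le s}={d-1\choose \le s}+{d-1\choose \le s-1}$ closes the induction for all $s$ simultaneously. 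The one point I would make explicit is the shape of the induction: since the step at dimension $d$ invokes the hypothesis at dimension $d-1$ for both $s$ and $s-1$, you should state that you induct on $d$ with the claim quantified over all $s\ge -1$, where the $s=-1$ case reads $|\A|\cdot|\B|\le 0$ and holds because two nonempty families cannot have all cross-intersections of size at most $-1$; your convention ${d-1\choose \le -1}=0$ already encodes this, so it is a matter of presentation rather than a gap.
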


We now prove Lemma~\ref{rank}.
\begin{proof}
Since the $s$-binary rank of $M$ is at most $d$, by Lemma~\ref{DDD}, there is a $n\times d$ $(0,1)$-matrix $N$ and a $d\times m$ $(0,1)$-matrix $L$ such that, for $P=NL$
\begin{enumerate}
    \item For every $(i,j)\in [n]\times [m]$, $M[i,j]=0$ if and only if $P[i,j]=0$.
    \item For every $(i,j)\in [n]\times [m]$, $P[i,j]\le s.$
\end{enumerate}
Obviously, $r(M)\le r(P)$ and $c(M)\le c(P)$. 
Consider $\A=\{A_1,\ldots,A_n\}\subseteq 2^{[d]}$ and $\B=\{B_1,\ldots,B_m\}\subseteq 2^{[d]}$, where $A_i=\{j|N_{i,j}=1\}$ and $B_k=\{j|L_{j,k}=1\}$. Since the entries of $P=NL$ are at most $s$, for every $i\in [n]$ and $k\in [m]$, $|A_i\cap B_k|\le s$.

By Lemma~\ref{drc} and~\ref{kkeeyy},
$$r(M)\cdot c(M)\le r(P)\cdot c(P)\le r(N)\cdot c(L)=  |\A|\cdot|\B|\le  {d\choose \le s}2^d.$$
\end{proof}

We now prove Lemma~\ref{Eq}

\begin{proof}
Let $N$ be a $2^d\times d$ $(0,1)$-matrix where its rows contain all the vectors in $\{0,1\}^d$. Let $L$ be a $d\times {d\choose \le s}$ matrix where its columns contain all the vectors in $\{0,1\}^d$ of weight at most $s$. Obviously, $P=NL$ is $2^d\times {d\choose \le s}$ with entries that are less than or equal to $s$. Define a $2^d\times {d\choose \le s}$ $(0,1)$-matrix $M'$ where $M'[i,j]=0$ if and only if $P[i,j]=0$. Then, by Lemma~\ref{DDD}, $M'$ is of $s$-binary rank at most~$d$. We now show that $r(M')\cdot c(M')= {d\choose \le s}2^d$. 

Since the identity $d\times d$ matrix $I_d$ is a sub-matrix of $L$, we have that $NI_d=N$ is $(0,1)$-matrix and a sub-matrix of $P$ and therefore of $M'$. Therefore, $r(M')\ge r(N)=2^d$. Since $I_d$ is a sub-matrix of $N$, by the same argument, $c(M')\ge c(L)={d\choose \le s}$. Therefore $r(M')\cdot c(M')\ge {d\choose \le s}2^d$. Thus, $r(M')\cdot c(M')= {d\choose \le s}2^d$.

We now show that $M'$ has $s$-binary rank $d$. Suppose the contrary, i.e.,  $M'$ has binary rank $d'<d$. Then there are $2^d\times d'$ $(0,1)$-matrix $N$ and $d'\times {d\choose \le s}$ $(0,1)$-matrix $L$ such that $P=NL$ and $M'[i,j]=0$ iff $P[i,j]=0$. Now by Lemma~\ref{drc}, $r(M')\le r(P)\le r(N)\le 2^{d'}<2^d$, which gives a contradiction. 
\end{proof}

\section{Testing The $s$-Binary Rank}
In this section, we present the  adaptive and non-adaptive testing algorithms for $s$-binary rank at most $d$. We first give the adaptive algorithm and prove Theorem~\ref{TH1}.

\subsection{The Adaptive Tester}

\newcounter{ALg}
\setcounter{ALg}{0}
\newcommand{\stepal}{\stepcounter{ALg}$\arabic{ALg}.\ $\>}
\newcommand{\steplabelal}[1]{\addtocounter{ALg}{-1}\refstepcounter{ALg}\label{#1}}
\begin{figure}[h!]
  \begin{center}
  \fbox{\fbox{\begin{minipage}{28em}
  \begin{tabbing}
  xx\=xxx\=xxxx\=xxxx\=xxxx\=xxxx\=xxxx \kill
  {{\bf Adaptive-Test-Rank}$(d,s,M,n,m,\epsilon)$}\\
{\bf Input}: Oracle that accesses the entries of $n\times m$ $(0,1)$-matrix $M$.  \\
{\bf Output}: Either ``Accept'' or ``Reject''\\
\\
\stepal\steplabelal{ALg01}
$X\gets \{1\}$; $Y\gets \{1\}$; $t= 9d/\epsilon$.\\
\stepal\steplabelal{ALg02}
{\bf While} $|X|\cdot |Y|\le {d\choose \le s}2^d$ {\bf do}\\
\stepal\steplabelal{ALg03}
\> {\bf If} the $s$-binary rank of $M[X,Y]$ is greater than $d$, {\bf then} Reject.\\
\stepal\steplabelal{ALg04}
\>$Finish\gets False$; $X'\gets \O$; $Y'\gets \O$. $/*$ $X'$ and $Y'$ are multi-sets.\\
\stepal\steplabelal{ALg05}
\> {\bf If} $|X|\ge |Y|$ {\bf then} \\
\stepal\steplabelal{ALg06}
\>\> {\bf While} (NOT $Finish$) AND $|X'|<t$ \\ \stepal\steplabelal{ALg07}
\>\>\> Draw uniformly at random  $x\in [n]\backslash X$; $X'\gets X'\cup\{x\}$;\\
\stepal\steplabelal{ALg08}
\>\>\> {\bf If} $M[x,Y]$ is a new row to $M[X,Y]$ {\bf then} $X\gets X\cup\{x\}$; $Finish\gets True.$\\
\stepal\steplabelal{ALg09}
\>\> {\bf If} (NOT $Finish$) {\bf then}\\
\stepal\steplabelal{ALg10}
\>\>\> {\bf While} (NOT $Finish$) AND $|Y'|<t$\\ 
\stepal\steplabelal{ALg11}
\>\>\>\> Draw uniformly at random  $y\in [m]\backslash Y$; $Y'\gets Y'\cup \{y\}$.\\
\stepal\steplabelal{ALg12}
\>\>\>\> {\bf If} $M[X,y]$ is new column to $M[X,Y]$ {\bf then} $Y\gets Y\cup\{y\}$; $Finish\gets True.$\\
\stepal\steplabelal{ALg13}
\> {\bf Else} ($|X|< |Y|$)  \\
\stepal\steplabelal{ALg14}
\>\> {\bf While} (NOT $Finish$) AND $|Y'|<t$\\ \stepal\steplabelal{ALg15}
\>\>\> Draw uniformly at random  $y\in [m]\backslash Y$; $Y'\gets Y'\cup\{y\}$;\\
\stepal\steplabelal{ALg16}
\>\>\> {\bf If} $M[X,y]$ is a new column to $M[X,Y]$ {\bf then} $Y\gets Y\cup\{y\}$; $Finish\gets True.$\\
\stepal\steplabelal{ALg17}
\>\> {\bf If} (NOT $Finish$) {\bf then}\\
\stepal\steplabelal{ALg18}
\>\>\> {\bf While} (NOT $Finish$) AND $|X'|<t$\\ 
\stepal\steplabelal{ALg19}
\>\>\>\> Draw uniformly at random  $x\in [n]\backslash X$; $X'\gets X'\cup \{x\}$\\
\stepal\steplabelal{ALg20}
\>\>\>\> {\bf If} $M[x,Y]$ is a new row to $M[X,Y]$ {\bf then} $X\gets X\cup\{x\}$; $Finish\gets True.$\\
\stepal\steplabelal{ALg21}
\> {\bf While} (NOT $Finish$) AND $X'\not=\O$ {\bf do}\\
\stepal\steplabelal{ALg22}
\>\> Draw uniformly at random $x\in X'$ and $y\in Y'$\\ 
\stepal\steplabelal{ALg23}
\>\>\> {\bf If} $M[x,Y\cup \{y\}]$ is a new row to $M[X,Y\cup \{y\}]$ OR, equivalently, \\
\stepal\steplabelal{ALg230}
\>\>\>\> $M[X\cup\{x\},y]$ is a new column to $M[X\cup \{x\},Y]$ \\
\stepal\steplabelal{ALg231}
\>\>\>\>\>{\bf then} $X\gets X\cup \{x\}; Y\gets Y\cup \{y\}$; $Finish\gets True$.\\
\stepal\steplabelal{ALg232}
\>\>\>\>\>{\bf else} $X'\gets X'\backslash \{x\}; Y'\gets Y'\backslash \{y\}$.\\
\stepal\steplabelal{ALg24}
\> {\bf If} (NOT $Finish$) {\bf then} Accept\\
\stepal\steplabelal{ALg25}
Reject
  \end{tabbing}
  \end{minipage}}}
  \end{center}
	\caption{An adaptive tester for $s$-binary rank at most $d$.}
	\label{cC}
\end{figure}

In this section, we prove Theorem~\ref{TH1}.

Consider the tester {\bf Adaptive-Test-Rank} in Figure~\ref{cC}.
The tester, at every iteration of the main While-loop (step~\ref{ALg02}) has a set $X$ of rows of $M$ and a set $Y$ of columns of $M$. If $|X|\ge |Y|$ (step~\ref{ALg05}), the tester first tries to extend $M[X,Y]$ with a new column (steps~\ref{ALg06}-\ref{ALg08}). If it succeeds, it moves to the next iteration. Otherwise, it tries to extend $M[X,Y]$ with a new row (steps~\ref{ALg09}-\ref{ALg12}). If it succeeds,  it moves to the next iteration. Otherwise, it tries to extend $M[X,Y]$ with a new row and a new column (steps~\ref{ALg21}-\ref{ALg232}). If it succeeds, it moves to the next iteration. If it fails, it accepts (step~\ref{ALg24}). 
If $|X|< |Y|$ (step~\ref{ALg13}), it starts with the row of $M[X,Y]$ (steps~\ref{ALg14}-\ref{ALg16}), then the column (steps~\ref{ALg18}-\ref{ALg20}), and then both (steps~\ref{ALg21}-\ref{ALg232}). If it fails, it accepts (step~\ref{ALg24}). 

If $|X|\cdot|Y|>{d\choose \le s}2^d$ (step~\ref{ALg02} and then step~\ref{ALg25}) or the $s$-binary rank of $M[X,Y]$ is greater than $d$ (step~\ref{ALg03}), then it rejects.

We first prove
\begin{lemma} Let $t= 9d/\epsilon$.
Tester {\bf Adaptive-Test-Rank} makes at most $2{d\choose \le s}2^dt=\tilde O\left({d\choose \le s}2^d\right)/\epsilon$ queries.
\end{lemma}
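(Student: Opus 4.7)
The plan is to charge every query the algorithm makes to an entry newly added to the tracked sub-matrix $M[X,Y]$, exploiting the fact that each iteration starts with the cheap extension phase (the one whose attempts cost the smaller of $|X|,|Y|$ queries). I would first verify the per-iteration query count. In a single iteration of the main While loop at Step~\ref{ALg02}, each row-phase attempt queries $M[x,Y]$ at cost $|Y|$, each column-phase attempt queries $M[X,y]$ at cost $|X|$, and each both-phase attempt needs only one new query $M[x,y]$ because for $x\in X'$ and $y\in Y'$ the entries $M[x,Y]$ and $M[X,y]$ have already been queried during the preceding row and column phases. Since every phase runs at most $t$ attempts, one iteration uses at most $t(|X|+|Y|+1)$ queries, and inductively all entries of $M[X,Y]$ inspected in the rank check at Step~\ref{ALg03} are already paid for.

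The main technical step is the charging argument. Writing $a=\min(|X|,|Y|)$ and $b=\max(|X|,|Y|)$, Steps~\ref{ALg05} and~\ref{ALg13} guarantee that the cheap phase (cost $a$ per attempt, extending the $b$-side by one) runs before the expensive phase (cost $b$ per attempt, extending the $a$-side by one). A three-case analysis on which phase succeeds bounds the queries-to-entries ratio in an iteration: (i) if the cheap phase succeeds, queries are at most $ta$ while $a$ new entries are added, a ratio of at most $t$; (ii) if the cheap phase fails and the expensive phase succeeds, queries are at most $ta+tb\le 2tb$ while $b$ new entries are added, a ratio of at most $2t$; (iii) if the both-phase succeeds, queries are at most $t(a+b+1)$ while $a+b+1$ new entries are added, again a ratio of at most $t$. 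So every iteration except possibly the single terminal one (in which Step~\ref{ALg24} accepts without adding anything) contributes queries at most $2t$ times the number of entries it adds to $M[X,Y]$.

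Summing, the total entries ever added equal $|X_T||Y_T|-1$, where $(X_T,Y_T)$ denote the final sets. The While-loop invariant $|X||Y|\le {d\choose \le s}2^d$ holds at the start of every iteration, and one iteration increases each of $|X|,|Y|$ by at most one, so $|X_T||Y_T|\le {d\choose \le s}2^d+|X_T|+|Y_T|+1$. Combined with $|X_T|,|Y_T|\le 2^d$ (a perfect sub-matrix of a matrix of $s$-binary rank at most $d$ has at most $2^d$ distinct rows and $2^d$ distinct columns, by Lemmas~\ref{DDD} and~\ref{drc}), this yields $|X_T||Y_T|=O\bigl({d\choose \le s}2^d\bigr)$. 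Adding the at-most $t(|X_T|+|Y_T|+1)=O\bigl(t\cdot 2^d\bigr)$ queries from the terminal iteration, the total is $O\bigl(t\cdot{d\choose \le s}2^d\bigr)$; since $t=9d/\epsilon$ this is $\tilde O\bigl({d\choose \le s}2^d/\epsilon\bigr)$, matching the claimed bound.

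The main obstacle is making case (ii) of the charging argument come out to a constant factor times $t$: the estimate $ta+tb\le 2tb$ uses $a\le b$, which is precisely the balance property enforced by running the cheap phase first. A tester that attacked the expensive side first could only guarantee a ratio of order $tb/a$, which may be as large as $t\cdot 2^d$ and would blow the bound. A secondary bookkeeping detail is the single terminal iteration (at Step~\ref{ALg24} or Step~\ref{ALg25}), whose queries are not paid for by new entries; its cost is easily absorbed using $|X|,|Y|\le 2^d$.
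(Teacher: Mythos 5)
Your proof is correct and follows essentially the same route as the paper: the paper's induction invariant $q_{X,Y}\le 2|X||Y|t$ is precisely your charging statement that total queries are at most $2t$ times the number of entries of the current $M[X,Y]$, and your three cases (with the crucial observation that the phase costing $\min(|X|,|Y|)$ per attempt runs first, and that a both-phase attempt needs only the single query $M[x,y]$) match the paper's Cases I--III. The only cosmetic differences are that you handle the final accounting slightly more carefully (separating the terminal accepting iteration and bounding $|X_T|,|Y_T|$ via the $2^d$ distinct-row bound, which strictly gives $2^d+1$ after the last increment), whereas the paper simply reads the bound off the while-loop condition.
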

\begin{proof}
We prove by induction that at every iteration of the main While-loop (step~\ref{ALg02}), the tester knows the entries of $M[X,Y]$, and the total number of queries, $q_{X,Y}$, is at most $2|X||Y|t$. Since the While-loop condition is $|X||Y|\le {d\choose \le s}2^d$, the result follows. 

At the beginning of the algorithm, no queries are made, and $|X|=|Y|=1$. Then $2|X||Y|t=2t>0=q_{X,Y}$. Suppose, at the $k$th iteration, the tester knows the entries of $M[X,Y]$ and $q_{X,Y}\le 2|X||Y|t$. We prove the result for the $(k+1)$th iteration. 

We have the following cases (at the $(k+1)$th iteration)

\noindent
{\bf Case I.} $|X|\ge |Y|$ (step \ref{ALg05}) and, for some $x$, $M[x,Y]$ is a new row to $M[X,Y]$ (step~\ref{ALg08}).

In that case, $Finish$ becomes $true$, and no other sub-while-loop is executed.
Therefore, the number of queries made at this iteration is at most $|Y|t$ (to find all $M[x,Y]$), and one element $x$ is added to $X$. Then, the tester knows all the entries of $M[X\cup\{x\},Y]$ and $$q_{X\cup\{x\},Y}=q_{X,Y}+|Y|t\le 2|X||Y|t+|Y|t\le 2|X\cup\{x\}|\cdot |Y|t,$$ and the result follows.

\noindent
{\bf Case II.} $|X|\ge |Y|$ (step \ref{ALg05}), for all $x'\in X'$, $M[x',Y]$ is not a new row to $M[X,Y]$ (step~\ref{ALg08}), and for some $y$, $M[X,y]$ is a new column to $M[X,Y]$ (step~\ref{ALg12}). 

In that case, $Finish$ becomes $true$, and no other sub-while-loop is executed after the second sub-while-loop (step~\ref{ALg10}).

Therefore, in this case, the number of queries made at this iteration is at most $|Y|t+|X|t$. $|X|t$ queries in the first sub-while-loop (to find $M[x,Y]$ for all $x\in X'$), and at most $|Y|t$ queries in the second sub-while-loop (to find $M[X,y']$ for all $y'\in Y'$). Then one element $y$ is added to $Y$. Therefore, the tester knows the entries of $M[X,Y\cup\{y\}]$ and, since $|Y|\le |X|$, $$q_{X,Y\cup\{y\}}=q_{X,Y}+|X|t+|Y|t\le 2|X||Y|t+2|X|t= 2|X|\cdot |Y\cup\{y\}|t,$$ and the result follows.

\noindent
{\bf Case III.} $|X|\ge |Y|$, for all $x'\in X'$, $M[x',Y]$ is not a new row to $M[X,Y]$, for all $y'\in Y'$, $M[X,y']$ is not a new column to $M[X,Y]$, and for some $x\in X', y\in Y'$, $M[x,Y\cup\{y\}]$ is a new row to $M[X,Y\cup \{y\}]$ (step~\ref{ALg23}). 

In this case, $|X'|=|Y'|=t$, the number of queries is $|X|t+|Y|t+t$. Exactly $|X|t$ queries in the first sub-while-loop, $|Y|t$ queries in the second sub-while-loop, and at most\footnote{This is because, for $x\in X',y\in Y'$, the tester already knows $M[x,Y]$ and $M[X,y]$ from the first and second sub-while-loop and only needs to query $M[x,y]$.}  $t$ queries in the sub-while-loop in step~\ref{ALg21}. 
Then 
one element $x$ is added to $X$, and one element $y$ is added to $Y$. Then the tester knows the entries of $M[X\cup \{x\},Y\cup\{y\}]$ and
$$q_{X\cup\{x\},Y\cup\{y\}}=q_{X,Y}+|X|t+|Y|t+t\le 2|X|\cdot|Y|t+|X|t+|Y|t+t\le  2|X\cup \{x\}|\cdot|Y\cup\{y\}|t.$$ 

\noindent
{\bf Case IV.} $|X|\ge |Y|$, for all $x'\in X'$, $M[x',Y]$ is not a new row to $M[X,Y]$, for all $y'\in Y'$, $M[X,y']$ is not a new column to $M[X,Y]$, and for all the drawn pairs $x\in X', y\in Y'$, $M[x,Y\cup\{y\}]$ is not a new row to $M[X,Y\cup \{y\}]$ (step~\ref{ALg23}).

In this case, $Finish$ will have value $False$, and the tester accepts in step~\ref{ALg24}.  

The analysis of the case when $|X|<|Y|$ is similar to the above analysis. 
\end{proof}

We now prove the completeness of the tester.
\begin{lemma}
If $M$ is a $n\times m$ $(0,1)$-matrix of $s$-binary rank at most $d$, then the tester {\bf Adaptive-Test-Rank} accepts with probability $1$.
\end{lemma}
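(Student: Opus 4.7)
The plan is to verify that, whenever $M$ has $s$-binary rank at most $d$, neither of the two rejection steps of the tester ever triggers; the algorithm then necessarily halts via the accept in step~\ref{ALg24}. The tester can reject only in step~\ref{ALg03} (if $\br_s(M[X,Y])>d$) or in step~\ref{ALg25} (if the While-loop guard $|X|\cdot|Y|\le{d\choose\le s}2^d$ fails). Ruling out step~\ref{ALg03} is immediate: any $s$-cover $\{I_k\times J_k\}_{k\in[d]}$ of $M$ restricts to a valid $s$-cover $\{(I_k\cap X)\times(J_k\cap Y)\}_{k\in[d]}$ of $M[X,Y]$, so $\br_s(M[X,Y])\le\br_s(M)\le d$ holds throughout the execution.

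For step~\ref{ALg25}, I will establish by induction on the main-loop iteration count the invariant that $M[X,Y]$ has pairwise distinct rows and pairwise distinct columns. This is vacuous at initialization where $|X|=|Y|=1$. For the inductive step I treat each way the algorithm can extend $(X,Y)$. First, extension by a single new row $x$ (steps~\ref{ALg08} and~\ref{ALg20}): the test explicitly certifies that $M[x,Y]$ is a new row, so row-distinctness is preserved, and column-distinctness is preserved automatically, since any two coinciding columns of $M[X\cup\{x\},Y]$ would restrict to two coinciding columns of $M[X,Y]$. Extension by a single new column (steps~\ref{ALg12} and~\ref{ALg16}) is symmetric. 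Simultaneous extension by both a row $x$ and a column $y$ (step~\ref{ALg231}) is the only nontrivial case: Lemma~\ref{ngivesn} identifies the two ``new'' criteria, so the single-direction arguments above, applied to $M[X,Y\cup\{y\}]$ for rows and to $M[X\cup\{x\},Y]$ for columns, yield pairwise distinctness of rows and of columns of $M[X\cup\{x\},Y\cup\{y\}]$.

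With the invariant in hand, Lemma~\ref{rank} applied to $M[X,Y]$ gives $|X|\cdot|Y|=r(M[X,Y])\cdot c(M[X,Y])\le{d\choose\le s}2^d$, so the While-loop guard in step~\ref{ALg02} never fails mid-execution, ruling out step~\ref{ALg25}. Since $|X|+|Y|$ is bounded by $n+m$ and strictly grows whenever $Finish$ becomes True, the loop must terminate, and by the above reasoning the only possible exit is step~\ref{ALg24} (Accept). The main obstacle is the simultaneous-extension case, which genuinely requires Lemma~\ref{ngivesn} to align the row-side and column-side ``new'' criteria; the remaining argument is bookkeeping once the distinctness invariant is identified and combined with Lemma~\ref{rank}.
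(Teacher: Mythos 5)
Your proof is correct and follows essentially the same route as the paper's: rule out the rank-based rejection by sub-matrix monotonicity of $\br_s$, maintain the invariant that $M[X,Y]$ has distinct rows and columns (using Lemma~\ref{ngivesn} for the simultaneous row-and-column extension), and invoke Lemma~\ref{rank} to show the While-loop guard never fails. The only cosmetic differences are that you argue sub-matrix monotonicity by restricting the rectangle cover rather than the $M=NL$ factorization, and you make the termination argument explicit where the paper leaves it implicit.
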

\begin{proof}
The tester rejects if and only if one of the following occurs,
\begin{enumerate}
    \item $M[X,Y]$ has $s$-binary rank greater than $d$.
    \item $|X|\cdot |Y|>{d\choose \le s}2^d$.
\end{enumerate}
If $M[X,Y]$ has $s$-binary rank greater than $d$, then $M$ has $s$-binary rank greater than $d$. This is because, if $M=NL$, then $M[X,Y]=N[X,[d]]\cdot L[[d],Y]$. So item 1 cannot occur.

Before we show that item 2 cannot occur, we prove the following:
\begin{claim}
The rows (resp. columns) of $M[X,Y]$ are distinct. 
\end{claim}
\begin{proof}
The steps in the tester where we add rows or columns are steps~\ref{ALg08}, \ref{ALg12} \ref{ALg16}, \ref{ALg20}, and~\ref{ALg23}. 
In steps~\ref{ALg08}, \ref{ALg12} \ref{ALg16}, \ref{ALg20} it is clear that a row (resp. column) is added only if it is a new row (resp. column) to $M[X,Y]$. 
Consider step~\ref{ALg23} and suppose, w.l.o.g $|X|\ge |Y|$. This step is executed only when $Finish=False$. This happens when $|X'|=|Y'|=t$, for every $x\in X'$, $M[x,Y]$ is not a new row to $M[X,Y]$, and for every $y\in Y'$, $M[X,y]$ is not a new column to $M[X,Y]$. Then $x$ and $y$ are added to $X$ and $Y$, respectively, if $M[x,Y\cup \{y\}]$ is a new row to $M[X,Y\cup \{y\}]$. Then, by Lemma~\ref{ngivesn}, $M[X\cup\{x\},y]$ is a new column to $M[X\cup\{x\},Y]$. So, the rows (and columns) in  $M[X\cup\{x\},Y\cup\{y\}]$ are distinct. This implies the result.
\end{proof}
Suppose, to the contrary, $|X|\cdot |Y|>{d\choose \le s}2^d$. Since $M'=M[X,Y]$ satisfies $r(M')c(M')=|X|\cdot |Y|>{d\choose \le s}2^d$, by Lemma~\ref{rank}, the $s$-binary rank of $M'$, and therefore of $M$, is greater than $d$. A contradiction.
\end{proof}
We now prove the soundness of the tester.

We first prove the following.

\begin{claim}
Let $M$ be a $n\times m$ $(0,1)$-matrix, $X\subseteq [n]$, and $Y\subseteq [m]$. Suppose there are two functions, $':[n]\to X$ and $'':[m]\to Y$, such that
\begin{enumerate}
\item For every $x\in [n]$,  $M[x,Y]=M[x',Y]$.
\item For every $y\in [m]$, $M[X,y]=M[X,y'']$.
\item For every $x\in [n]$ and $y\in [m]$, $M[x,y]=M[x',y'']$.
\end{enumerate}
Then $M$ has at most $|X|$ distinct rows and $|Y|$ distinct columns, and its $s$-binary rank is the $s$-binary rank of $M[X,Y]$.
\end{claim}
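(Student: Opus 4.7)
My plan is to attack the two assertions separately, starting with the row/column count which will essentially be a one-line consequence of condition~3, and then using the same two functions $'$ and $''$ to transfer an optimal cover back and forth between $M$ and $M[X,Y]$.

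First I would handle the distinct rows (columns are symmetric). The key observation is that condition~3 says $M[x,y]=M[x',y'']$ for every $y\in [m]$, so the entire row $M[x,[m]]$ depends only on $x'\in X$, not on $x$ itself: if $x_1'=x_2'$, then for every $y$, $M[x_1,y]=M[x_1',y'']=M[x_2',y'']=M[x_2,y]$, hence $M[x_1,[m]]=M[x_2,[m]]$. Therefore the number of distinct rows of $M$ is at most $|\{x':x\in [n]\}|\le |X|$, and the analogous argument via condition~3 (applied with the roles of $x$ and $y$ swapped) gives at most $|Y|$ distinct columns.

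For the $s$-binary rank part, I would prove the two inequalities. The easy direction, $\br_s(M[X,Y])\le \br_s(M)$, follows by taking any optimal cover $\{I_k\times J_k\}_{k\in[d]}$ of $M$ and restricting it to $\{(I_k\cap X)\times(J_k\cap Y)\}$: monochromaticity is preserved, the covering multiplicity of any $(x,y)\in X\times Y$ is unchanged, and every $1$-entry of $M[X,Y]$ is still covered since it is covered in $M$. For the reverse direction, given an optimal cover $\{I_k\times J_k\}_{k\in [d]}$ of $M[X,Y]$ with $I_k\subseteq X$, $J_k\subseteq Y$, I would \emph{lift} it by defining
\[
\tilde I_k=\{x\in [n]:x'\in I_k\},\qquad \tilde J_k=\{y\in [m]:y''\in J_k\}.
\]
Using condition~3, each $(x,y)\in \tilde I_k\times \tilde J_k$ satisfies $M[x,y]=M[x',y'']=1$, so the lifted rectangles are monochromatic. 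Moreover, $(x,y)\in \tilde I_k\times \tilde J_k$ iff $(x',y'')\in I_k\times J_k$, so the covering multiplicity of any $(x,y)$ in the lifted system equals the covering multiplicity of $(x',y'')$ in the original system; this gives both the ``at most $s$'' bound and the ``at least one'' bound (since if $M[x,y]=1$ then $M[x',y'']=1$, which is covered in $M[X,Y]$). Hence $\br_s(M)\le \br_s(M[X,Y])$, and the two inequalities yield equality.

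I do not foresee a real obstacle here: the claim is essentially bookkeeping, and the only subtle point is making sure the lifting step preserves not just coverage but also the multiplicity constraint of ``$\le s$'', which is why I emphasized the exact identity between covering multiplicities in $M$ and in $M[X,Y]$ via the correspondence $(x,y)\leftrightarrow (x',y'')$. Condition~3 is precisely what makes this correspondence valid at the entry level, while conditions~1 and~2 are the marginal consistency statements that are in fact implied by condition~3 (taking $x=x'$ or $y=y''$ if $'$ and $''$ are the identity on $X$ and $Y$), but in any case are not needed for this particular proof.
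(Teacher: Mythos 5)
Your proof is correct and follows essentially the same route as the paper's: both reduce the statement to the observation that, via the maps $'$ and $''$, the matrix $M$ is just $M[X,Y]$ with duplicated rows and columns, and that duplication preserves the $s$-binary rank. The only difference is that you spell out explicitly the lifting and restriction of rectangle covers that the paper relegates to a footnote (and you correctly note that condition~3 alone suffices for your argument).
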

\begin{proof}
Let $x\in [n]\backslash X$. For every $y$, $M[x,y]=M[x',y'']=M[x',y]$. Therefore, row $x$ in $M$ is equal to row $x'$. Similarly, column $y$ in $M$ is equal to column $y''$.

Since adding equal columns and rows to a matrix does not change the $s$-binary rank\footnote{If we add a column to a matrix that is equal to column $y$, then the rectangles that cover column $y$ can be extended to cover the added column.}, we have $\brank_s(M[X,Y])=\brank_s(M[X,[m]])=\brank_s(M)$.
\end{proof}

The following Claim is proved in~\cite{ParnasRS21} (Claim 10). Here, we give the proof for completeness.
\begin{claim}\label{ranks}Let $M$ be a $(0,1)$-matrix that is $\epsilon$-far from having $s$-binary rank at most $d$. Let $X\subseteq [n]$ and $Y\subseteq [m]$, such that $\brank_s(M[X,Y])\le d$, the columns of $M[X,Y]$ are distinct, and the rows of $M[X,Y]$ are distinct. Then one of the following must hold:
\begin{enumerate}
    \item The number of rows $x\in [n]$ where $M[x,Y]$ is a new row to $M[X,Y]$ is at least $ n\epsilon/3$.
    \item The number of columns $y\in [m]$ where $M[X,y]$ is a new column to $M[X,Y]$ is at least $ m\epsilon/3$.
    \item The number pairs $(x,y)$, $x\not\in X$, $y\not\in Y$, where, $M[x,Y]=M[x',Y]$ for some $x'\in X$,  $M[X,y]=M[X,y'']$ for some $y''\in Y$, and $M[x,y]\not=M[x',y'']$, is at least $ mn\epsilon/3$.
\end{enumerate}
\end{claim}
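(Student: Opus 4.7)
The plan is to prove the contrapositive: assume none of the three items holds, and construct a matrix $\tilde M$ with $\brank_s(\tilde M) \le d$ and $\dist(M,\tilde M) < \epsilon n m$, contradicting the $\epsilon$-farness of $M$. Let $B_1 := \{x \in [n] : M[x,Y]\text{ is a new row to }M[X,Y]\}$ and $B_2 := \{y \in [m] : M[X,y]\text{ is a new column to }M[X,Y]\}$. Negating items 1 and 2 gives $|B_1| < n\epsilon/3$ and $|B_2| < m\epsilon/3$, while negating item 3 bounds the number of ``bad'' pairs by $n m \epsilon/3$. Note that since the rows (resp.\ columns) of $M[X,Y]$ are distinct, $X \cap B_1 = \emptyset$ and $Y \cap B_2 = \emptyset$.

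Next, I would define maps $':[n] \to X$ and $'':[m] \to Y$ by setting $x' := x$ for $x \in X$, choosing $x' \in X$ with $M[x,Y] = M[x',Y]$ for $x \notin B_1 \cup X$ (possible by definition of $B_1$), and picking $x' \in X$ arbitrarily for $x \in B_1$; define $''$ analogously. Put $\tilde M[x,y] := M[x',y'']$ for every $(x,y)$. Since $x \in X \Rightarrow x' = x$ and $y \in Y \Rightarrow y'' = y$, we get $\tilde M[X,Y] = M[X,Y]$. Moreover, one checks directly that $\tilde M$ with these same maps $'$, $''$ satisfies the three hypotheses of the preceding claim, because e.g.\ for $y \in Y$ we have $y'' = y$, so $\tilde M[x,y] = M[x',y] = \tilde M[x',y]$, i.e.\ $\tilde M[x,Y] = \tilde M[x',Y]$, and similarly for the other two items. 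Applying the preceding claim to $\tilde M$ gives $\brank_s(\tilde M) = \brank_s(\tilde M[X,Y]) = \brank_s(M[X,Y]) \le d$.

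Finally, I would partition the cells where $M$ and $\tilde M$ can disagree into three classes: (i) $x \in B_1$, any $y$, contributing at most $|B_1|\cdot m < n m \epsilon/3$ entries; (ii) $x \notin B_1$, $y \in B_2$, contributing at most $n\cdot |B_2| < n m \epsilon/3$; and (iii) $x \notin B_1 \cup X$, $y \notin B_2 \cup Y$ with $M[x,y] \neq M[x',y'']$, bounded by $n m \epsilon/3$ by assumption. All remaining cells agree: for $x \in X$ and $y \notin B_2 \cup Y$, the identity $M[X,y] = M[X,y'']$ forces $M[x,y] = M[x,y''] = \tilde M[x,y]$, and symmetrically for $x \notin B_1 \cup X$, $y \in Y$; the case $x \in X, y \in Y$ is trivial. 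Summing, $\dist(M,\tilde M) < \epsilon n m$, yielding the contradiction. The main obstacle is the last bookkeeping step: one must verify carefully that arbitrariness of $x'$ on $B_1$ and $y''$ on $B_2$ never leaks into disagreements outside classes (i)--(iii), which is precisely what the identities $M[x,Y] = M[x',Y]$ (for $x \notin B_1$) and $M[X,y] = M[X,y'']$ (for $y \notin B_2$) provide.
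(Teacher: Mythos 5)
Your proposal is correct and follows essentially the same route as the paper: negate all three items, repair fewer than $\epsilon nm$ entries, and invoke the preceding claim about the maps $'$ and $''$ to conclude the repaired matrix has $s$-binary rank at most $d$. The only (cosmetic) difference is that the paper zeroes out the rows in $B_1$ and the columns in $B_2$, whereas you overwrite them with copies of rows and columns indexed by $X$ and $Y$, which lets you apply the preceding claim to the whole modified matrix directly instead of to the submatrix obtained by deleting the zeroed rows and columns.
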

\begin{proof}
Assume, to the contrary, that none of the above statements holds.  Change every row $x$ in $M$ where $M[x,Y]$ is a new row to $M[X,Y]$ to a zero row. Let $X'$ be the set of such rows. Change every column $y$ in $M$ where $M[X,y]$ is a new row to $M[X,Y]$ to a zero column. Let $Y'$ be the set of such columns. For every other entry $(x,y)$, $x\not\in X$, $y\not\in Y$ that is not changed to zero and $M[x,y]\not=M[x',y'']$, change  $M[x,y]$ to $M[x',y'']$. Let $M'$ be the  matrix obtained from the above changes. 

The number of entries $(x,y)$ where $M[x,y]\not=M'[x,y]$ is less than $(n\epsilon/3)m+(m\epsilon/3)n+mn\epsilon/3=\epsilon mn$. Therefore, $M'$ is $\epsilon$-close to $M$. By claim~\ref{ranks}, $\brank_s(M')=\br_s(M[[n]\backslash X',[m]\backslash Y'])=\br_s(M[X,Y])\le d$. A contradiction. 
\end{proof}
We now prove the completeness of the tester.
\begin{lemma}
If $M$ is $\epsilon$-far from having $s$-binary rank $d$, then with probability at least $2/3$, {\bf Adaptive-Test-Rank} rejects. 
\end{lemma}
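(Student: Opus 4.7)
\textit{Proof plan.} The plan is to show that, when $M$ is $\epsilon$-far, every iteration of the main \textbf{While}-loop that is entered without rejecting at step~\ref{ALg03} ``succeeds'' (extends $X$ or $Y$) with probability at least $1-e^{-3d}$. Since the only route to \textbf{Accept} is step~\ref{ALg24}, which requires some iteration to end with $Finish=\mathit{False}$, a union bound over the at most $2^{d+1}$ possible iterations will yield total acceptance probability at most $2^{d+1}\cdot e^{-3d}\le 1/3$ for $d\ge 1$, and the tester must reject with the remaining probability.

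\textit{Per-iteration bound.} Fix an iteration with current sets $X,Y$ and assume $\brank_s(M[X,Y])\le d$ (otherwise we have already rejected). The previous claim guarantees that the rows and columns of $M[X,Y]$ are distinct, so Claim~\ref{ranks} applies and at least one of its conditions (1), (2), (3) holds. The iteration fails precisely when \emph{all three} phases fail: no new row is drawn in the $t=9d/\epsilon$ samples of the row phase, no new column in the column phase, and no bad pair appears in the random matching the both-phase forms on $X'\times Y'$. Under condition (1), a uniform $x\in[n]\setminus X$ is new with probability at least $(n\epsilon/3)/(n-|X|)\ge\epsilon/3$, so the row phase fails with probability at most $(1-\epsilon/3)^{t}\le e^{-3d}$; condition (2) is handled identically by the column phase. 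Under condition (3), the $t$ pairs that the both-phase inspects, when viewed as random elements of $([n]\setminus X)\times([m]\setminus Y)$, are distributionally \emph{equivalent} to $t$ i.i.d.\ uniform pairs (the random matching permutes i.i.d.\ $y$-coordinates, which leaves the product law invariant); since the bad set $B$ of condition~(3) lies inside $([n]\setminus X)\times([m]\setminus Y)$ with $|B|\ge mn\epsilon/3$, each such pair is bad with probability at least $\epsilon/3$, so the both-phase fails with probability at most $(1-\epsilon/3)^t\le e^{-3d}$. In every case at least one of the three phases must succeed with probability $\ge 1-e^{-3d}$, and hence the conjunction ``iteration fails'' has probability at most $e^{-3d}$.

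\textit{Union bound and conclusion.} Whenever the tester enters an iteration without rejecting at step~\ref{ALg03}, distinctness of rows and columns together with $\brank_s(M[X,Y])\le d$ force $|X|,|Y|\le 2^d$; since each successful iteration strictly increases $|X|+|Y|$, the total number of iterations is at most $2^{d+1}$. With probability at least $1-2^{d+1}e^{-3d}\ge 2/3$ every iteration succeeds, in which case the tester is forced to reject --- either at step~\ref{ALg03} once the rank finally exceeds $d$, or via the \textbf{While}-loop exit and step~\ref{ALg25} once $|X|\cdot|Y|$ exceeds ${d\choose\le s}2^d$ (one of these must eventually fire because $|X|+|Y|$ cannot grow indefinitely under $|X|,|Y|\le 2^d$).

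\textit{Main obstacle.} The one non-trivial step is the both-phase analysis in case~(3): step~\ref{ALg232} removes the drawn pair $(x,y)$ from $X'\times Y'$ after each failed check, so the $t$ inspected pairs form a random matching rather than $t$ independent samples, and on its face the bound $(1-\epsilon/3)^t$ need not apply. The resolution is the re-labeling argument sketched above: because the $x$'s and $y$'s feeding the matching are themselves i.i.d., the joint law of the matching's $t$ pairs coincides with that of $t$ i.i.d.\ uniform pairs in $([n]\setminus X)\times([m]\setminus Y)$, and the clean Chernoff-type bound goes through directly.
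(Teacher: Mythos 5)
Your proof is correct and follows essentially the same route as the paper: invoke Claim~\ref{ranks} to obtain a per-iteration success probability of at least $1-(1-\epsilon/3)^t\ge 1-e^{-3d}$, then union-bound over all iterations of the main While-loop. If anything, your version is slightly more careful than the paper's in two places: you bound the number of relevant iterations by $2^{d+1}$ via $|X|,|Y|\le 2^d$ where the paper uses the cruder $|X|+|Y|\le 2^{2d+1}$ (whose combination with $3e^{-2d}$ does not actually give $1/3$ for small $d$), and you explicitly justify why the random matching formed in steps~\ref{ALg21}--\ref{ALg232} may be treated as $t$ i.i.d.\ uniform pairs, a dependence issue the paper passes over silently.
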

\begin{proof}
Consider the while-loop in step~\ref{ALg02} at some iteration $i$. If $\brank_s(M[X,Y])>d$, then the tester rejects in step~\ref{ALg03}. We will now show that if $\brank_s(M[X,Y])\le d$, then, with probability at most $3e^{-2d}$, the tester accepts at iteration $i$.

To this end, let $\brank_s(M[X,Y])\le d$. Then, by Claim~\ref{ranks}, one of the following holds.
\begin{enumerate}
    \item\label{it1} The number of rows $x\in [n]$ where $M[x,Y]$ is a new row to $M[X,Y]$ is at least $ n\epsilon/3$.
    \item\label{it2} The number of columns $y\in [m]$ where $M[X,y]$ is a new column to $M[X,Y]$ is at least $ m\epsilon/3$.
    \item\label{it3} The number pairs $(x,y)$, $x\not\in X$, $y\not\in Y$, where, $M[x,Y]=M[x',Y]$ for some $x'\in X$,  $M[X,y]=M[X,y'']$ for some $y''\in Y$, and $M[x,y]\not=M[x',y'']$, is at least $ mn\epsilon/3$.
\end{enumerate}
Now at the $i$th iteration, suppose w.l.o.g, $|X|\ge |Y|$ (the other case $|Y|<|X|$ is similar). If item~\ref{it1} occurs, then with probability at least $p=1-(1-\epsilon/3)^t\ge 1-e^{-2d}$, the tester finds a new row to $M[X,Y]$ and does not accept at iteration $i$. If item~\ref{it2} occurs, then if it does not find a new row to $M[X,Y]$, with probability at least $p$, the tester finds a new column to $M[X,Y]$ and does not accept. If item~\ref{it3} occurs, and it does not find a new row or column to $M[X,Y]$, then with probability at least $p$, it finds such a pair and does not accept. Therefore, with probability at most $3(1-p)\le 3e^{-2d}$, the tester accepts at iteration $i$. 

Since the while-loop runs at most $|X|+|Y|\le 2|X||Y|\le 2{d\choose \le s}2^d\le 2^{2d+1}$ iterations, with probability at most $3e^{-2d}2^{2d+1}\le 1/3$, the tester accepts in while-loop. 
Therefore, with probability at least $2/3$, the tester does not accept in the while-loop. Thus, it either rejects because $\brank_s(M[X,Y])>d$ or rejects in step~\ref{ALg25}.
\end{proof}

\subsection{The Non-Adaptive Tester}
In this section, we prove Theorem~\ref{TH2}.
\newcounter{ALC2}
\setcounter{ALC2}{0}
\newcommand{\stepb}{\stepcounter{ALC2}$\arabic{ALC2}.\ $\>}
\newcommand{\steplabelb}[1]{\addtocounter{ALC2}{-1}\refstepcounter{ALC2}\label{#1}}
\begin{figure}[h!]
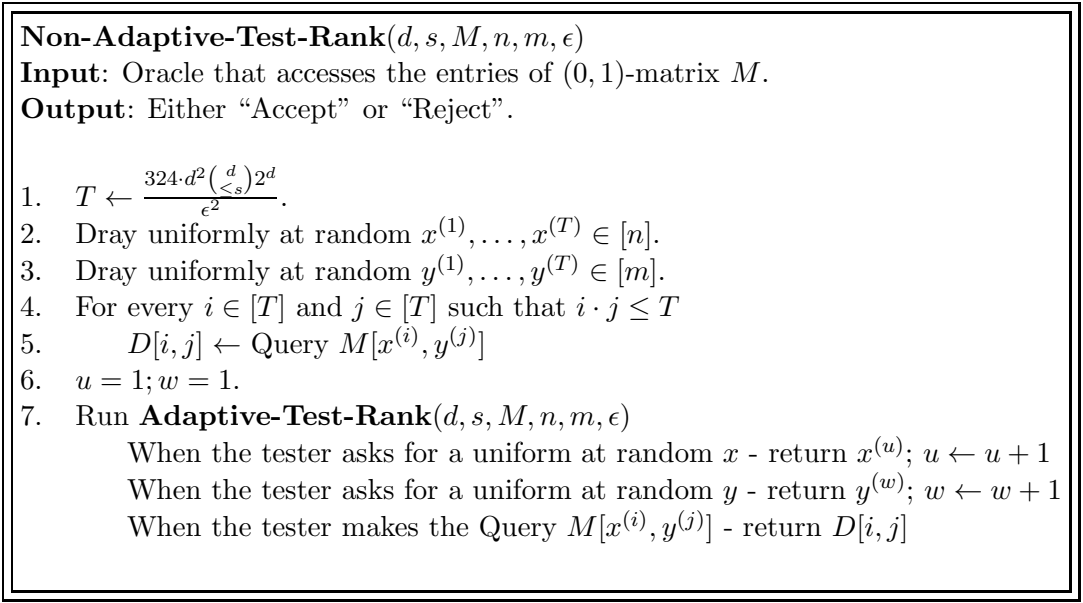

  \begin{center}
  \fbox{\fbox{\begin{minipage}{28em}
  \begin{tabbing}
  xxx\=xxxx\=xxxx\=xxxx\=xxxx\=xxxx\= \kill
  {{\bf Non-Adaptive-Test-Rank}$(d,s,M,n,m,\epsilon)$}\\
 {\bf Input}: Oracle that accesses the entries of $(0,1)$-matrix $M$. \\
  {\bf Output}: Either ``Accept'' or ``Reject''.\\ \\
\stepb\steplabelb{Lit01}
$T\gets \frac{324\cdot d^2{d\choose \le s}2^d}{\epsilon^2}$.\\
\stepb\steplabelb{LitT1}
Dray uniformly at random $x^{(1)},\ldots,x^{(T)}\in [n]$.\\
\stepb\steplabelb{Uni1}
Dray uniformly at random $y^{(1)},\ldots,y^{(T)}\in [m]$.\\
\stepb\steplabelb{Rej211}
For every $i\in [T]$ and $j\in [T]$ such that $i\cdot j\le T$\\
\stepb\steplabelb{ConB1}
\> $D[i,j]\gets$ Query $M[x^{(i)},y^{(j)}]$\\
\stepb\steplabelb{CowE11}
$u=1; w=1.$\\
\stepb\steplabelb{ConE11}
Run {\bf Adaptive-Test-Rank}$(d,s,M,n,m,\epsilon)$\\
\> \>When the tester asks for a uniform at random $x$ - return $x^{(u)}$; $u\gets u+1$\\
\> \>When the tester asks for a uniform at random $y$ - return $y^{(w)}$; $w\gets w+1$\\
\>\> When the tester makes the Query $M[x^{(i)},y^{(j)}]$ - return $D[i,j]$\\
  \end{tabbing}
  \end{minipage}}}
  \end{center}
	\caption{A non-adaptive tester for $s$-binary rank at most $d$.}
	\label{A32}
	\end{figure}

First, consider {\bf Adaptive-Test-Rank} in Figure~\ref{cC}. Consider steps~\ref{ALg07},\ref{ALg11},\ref{ALg15}, and~\ref{ALg19}, where it draws a new column or row. We prove.
\begin{lemma}\label{TDR} Let $t=9d/\epsilon$. 
At each iteration of {\bf Adaptive-Test-Rank}, the total number of  uniformly at random rows $x\in [n]$ drawn is at most $(|X|+\min(|X|,|Y|-1))t$, and the number of uniformly at random rows $y\in [m]$ drawn is at most $(|Y|+\min(|X|,|Y|))t$.
\end{lemma}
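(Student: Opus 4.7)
The plan is to prove both bounds by induction on the number of completed iterations of the main while-loop (step~\ref{ALg02}). In the base case no iteration has yet executed, $|X|=|Y|=1$, and zero rows and columns have been drawn, so both claimed bounds hold trivially. Throughout, I will use the fact that the pair-loop in steps~\ref{ALg21}--\ref{ALg232} does not draw any new rows or columns since it only samples from the already-drawn multi-sets $X'$ and $Y'$; hence every fresh row draw happens in step~\ref{ALg07} or~\ref{ALg19} and every fresh column draw in step~\ref{ALg11} or~\ref{ALg15}, so at most $t$ new row draws and $t$ new column draws occur per iteration.

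For the inductive step I would classify each iteration by how $(|X|,|Y|)$ changes: a row step ($|X|$ grows by one), a column step ($|Y|$ grows by one), or a diagonal step (both grow by one via step~\ref{ALg23}). Each type splits further according to whether $|X|\ge |Y|$ or $|X|<|Y|$ held at the start of the iteration, which selects the branch at step~\ref{ALg05} vs.\ step~\ref{ALg13}. Tabulating the number of new row draws and column draws in each of the six sub-cases gives: a row step with $|X|\ge |Y|$ contributes $\le t$ row draws and $0$ column draws (step~\ref{ALg08} succeeds); a row step with $|X|<|Y|$ contributes $t$ column draws and $\le t$ row draws (step~\ref{ALg15} exhausts its budget before step~\ref{ALg20} succeeds); the two column-step sub-cases are symmetric; and a diagonal step always contributes exactly $t$ row draws and $t$ column draws.

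I then verify that adding these increments to the inductive bounds $((|X|-\Delta_X)+\min(|X|-\Delta_X,|Y|-\Delta_Y-1))t$ and $((|Y|-\Delta_Y)+\min(|X|-\Delta_X,|Y|-\Delta_Y))t$ preserves the target bounds with the updated $(|X|,|Y|)$. The verification in each sub-case reduces to the elementary inequalities $\min(a{-}1,b{-}1)+1\le \min(a,b)$ and, when the branch choice forces $a\ge b$ or $a\le b$, to equalities such as $\min(a,b{-}2)+1=\min(a,b{-}1)$ for $a\ge b{-}1$. The cases where equality is attained (for instance, a column step starting from $|X|\ge |Y|$ wastes a full $t$ row draws) are exactly what forces the extra $\min(|X|,|Y|-1)$ and $\min(|X|,|Y|)$ terms in the bounds.

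The main obstacle is choosing the invariant rather than executing the case analysis: the naive bound "at most $t$ row draws per iteration" gives $(|X|+|Y|)t$, which is too weak for the non-adaptive reduction in Theorem~\ref{TH2}. The correct invariant captures the observation that an iteration incurs $t$ unused row draws only when the iteration is a column step taken from the $|X|\ge |Y|$ branch (or a diagonal step), and the number of such iterations along any execution path from $(1,1)$ to the current $(|X|,|Y|)$ is bounded by $\min(|X|,|Y|-1)$; the symmetric statement for columns yields the $\min(|X|,|Y|)$ correction, and this is precisely the bookkeeping that the induction verifies.
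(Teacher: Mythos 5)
Your proposal is correct and follows essentially the same route as the paper: induction on the iterations of the main while-loop with the same invariant, the same case split by which branch is taken and which of the row/column/pair extensions succeeds, and the observation that the pair-loop reuses already-drawn indices so each iteration adds at most $t$ fresh row draws and $t$ fresh column draws. The arithmetic verifications you sketch match the paper's case-by-case computations.
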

\begin{proof}
We prove by induction that at every iteration of the main While-loop (step~\ref{ALg02}), the total number of random rows drawn by the tester, $n_{X,Y}$, is at most $(|X|+\min(|X|,|Y|-1))t$, and the total number of random columns drawn, $m_{X,Y}$, is at most $(|Y|+\min(|X|,|Y|))t$. 

At the beginning, $|X|=|Y|=1$, and the number of columns and rows is $1$. In that case,\footnote{We assume that the first column/row drawn is column/row one}, $n_{X,Y}=1\le t$ and $m_{X,Y}=1\le 2t$. Suppose, at the $k$th iteration, the induction statement is true. We prove the result for the $(k+1)$th iteration. 

At the $(k+1)$th iteration, we have the following cases.

\noindent
{\bf Case I.} $|X|\ge |Y|$ (step \ref{ALg05}) and, for some $x$, $M[x,Y]$ is a new row to $M[X,Y]$ (step~\ref{ALg08}).

In that case, $Finish$ becomes $true$, and no other sub-while-loop is executed.
Therefore, the number of rows drawn at this iteration is at most $t$, and one element $x$ is added to $X$. No columns are drawn. Then,  $$n_{X\cup\{x\},Y}\le n_{X,Y}+t\le (|X|+\min(|X|,|Y|-1)+1)t\le (|X\cup\{x\}|+\min(|X\cup\{x\}|,|Y|-1))t,$$ and $$m_{X\cup\{x\},Y}=m_{X,Y}\le (|Y|+\min(|X|,|Y|))t\le (|Y|+\min(|X\cup\{x\}|,|Y|))t.$$
Thus, the result follows for this case.

\noindent
{\bf Case II.} $|X|\ge |Y|$ (step \ref{ALg05}), for all $x'\in X'$, $M[x',Y]$ is not a new row to $M[X,Y]$ (step~\ref{ALg08}), and for some $y$, $M[X,y]$ is a new column to $M[X,Y]$ (step~\ref{ALg12}). 

In that case, $Finish$ becomes $true$, and no other sub-while-loop is executed after the second sub-while-loop (step~\ref{ALg10}).

Therefore, in this case, the number of rows drawn at this iteration is $t$, one element $y$ is added to $Y$, and the number of columns drawn is at most $t$. Then
\begin{eqnarray*}
n_{X,Y\cup\{y\}}=n_{X,Y}+t &\le& (|X|+\min(|X|,|Y|-1)+1)t\\
&=& (|X|+|Y|)t= (|X|+\min(|X|,|Y\cup\{y\}|-1))t,
\end{eqnarray*} and
$$m_{X,Y\cup\{y\}}\le m_{X,Y}+t\le (|Y|+\min(|X|,|Y|)+1)t\le (|Y\cup\{y\}|+\min(|X|,|Y\cup\{y\}|))t.$$ 
Thus, the result follows for this case.

\noindent
{\bf Case III.} $|X|< |Y|$ (step~\ref{ALg13}), and for some $y$, $M[X,y]$ is a new column to $M[X,Y]$ (step~\ref{ALg16}).

In that case, $Finish$ becomes $true$, and no other sub-while-loop is executed.
Therefore, the number of columns drawn at this iteration is at most $t$, and one element $y$ is added to $Y$. No rows are drawn. Then,  $$n_{X,Y\cup\{y\}}=n_{X,Y}\le (|X|+\min(|X|,|Y|-1))t\le (|X|+\min(|X|,|Y\cup\{y\}|-1))t,$$ and $$m_{X,Y\cup\{y\}}\le m_{X,Y}+t\le  (|Y|+\min(|X|,|Y|)+1)t= (|Y\cup\{y\}|+\min(|X|,|Y\cup\{y\}|))t.$$
Thus, the result follows for this case.

\noindent
{\bf Case IV.} $|X|< |Y|$ (step~\ref{ALg13}), for all $y'\in Y'$, $M[X,y']$ is not a new row to $M[X,Y]$, and for some $x$, $M[x,Y]$ is a new column to $M[X,Y]$ (step~\ref{ALg20}). 
In that case, $Finish$ becomes $true$, and no other sub-while-loop is executed after the fourth sub-while-loop (step~\ref{ALg18}).

In this case, the number of rows drawn at this iteration is $t$, one element $x$ is added to $X$, and the number of columns drawn is at most $t$. Then
\begin{eqnarray*}
n_{X\cup\{x\},Y}=n_{X,Y}+t &\le& (|X|+\min(|X|,|Y|-1)+1)t\\
&\le & (|X\cup\{x\}|+\min(|X\cup\{x\}|,|Y|-1))t
\end{eqnarray*}
$$m_{X\cup\{x\},Y}\le m_{X,Y}+t\le (|Y|+\min(|X|,|Y|)+1)t= (|Y|+\min(|X\cup\{x\}|,|Y|))t.$$ 
Thus, the result follows for this case.

\noindent
{\bf Case V.} For all $x'\in X'$, $M[x',Y]$ is not a new row to $M[X,Y]$, for all $y'\in Y'$, $M[X,y']$ is not a new column to $M[X,Y]$, and for some $x\in X', y\in Y'$, $M[x,Y\cup\{y\}]$ is a new row to $M[X,Y\cup \{y\}]$ (step~\ref{ALg23}). 

In this case, the number of rows drawn at this iteration is $t$, the number of columns drawn is $t$, one element $x$ is added to $X$, and one element $y$ is added to $Y$.
Then
\begin{eqnarray*}
n_{X\cup\{x\},Y\cup\{y\}}=n_{X,Y}+t &\le& (|X|+\min(|X|,|Y|-1)+1)t\\
&\le& (|X\cup\{x\}|+\min(|X\cup\{x\}|,|Y\cup\{y\}|-1))t.
\end{eqnarray*}
\begin{eqnarray*}
m_{X\cup\{x\},Y\cup\{y\}}=m_{X,Y}+t &\le& (|Y|+\min(|X|,|Y|)+1)t\\
&\le& (|Y\cup\{y\}|+\min(|X\cup\{x\}|,|Y\cup\{y\}|))t.
\end{eqnarray*}
\end{proof}

We are now ready to prove Theorem~\ref{TH2}.
\begin{proof}
By Lemma~\ref{TDR}, the total number of rows and columns drawn in {\bf Adaptive-Test-Rank} up to iteration $t$ is at most $n':=9(|X|+\min(|X|,|Y|-1))d/\epsilon\le 18|X|d/\epsilon$ and $m':=9(|Y|+\min(|X|,|Y|)d/\epsilon\le 18|Y|d/\epsilon$, respectively. We also have $|X|\cdot |Y|\le {d\choose \le s}2^d$. So $$n'\cdot m'\le 324|X||Y|d^2/\epsilon^2\le T:= \frac{324\cdot d^2{d\choose \le s}2^d}{\epsilon^2}.$$ 

Consider the tester {\bf Non-Adaptive-Test-Rank} in Figure~\ref{A32}. The tester draws $T$ rows $x^{(1)},\ldots,$ $x^{(T)}\in [n]$, and columns $y^{(1)},\ldots,y^{(T)}\in [m]$ and queries all $M[x^{(i)},y^{(j)}]$ where $ij\le T$ and puts the result in the table $D$. Then it runs {\bf Adaptive-Test-Random} using the above-drawn rows and columns. We now show that all the queries that {\bf Adaptive-Test-Random} makes can be fetched from the table~$D$.

At any iteration, the number of rows drawn is at most $n'$, and the number of rows drawn is at most $m'$. Therefore, the tester needs to know (in the worst case) all the entries $M[x^{(i)},y^{(j)}]$ where $i\le n'$ and $j\le m'$. Since $ij\le n'm'\le T$, the result follows.

The number of queries that the tester makes is 
$$\sum_{i=1}^T\frac{T}{i}=O(T\ln T)=\tilde O\left(\frac{{d\choose \le s}2^d}{\epsilon^2}\right).$$
\end{proof}

\ignore{\section{A Tester with Query Complexity Linear in $\min(n,m)^2$}
In this section, we prove

\begin{theorem}\label{TH3}
There exists an adaptive one-sided error tester for $s$-binary rank of $n\times m$ $(0,1)$-matrices that makes $\tilde O\left(\frac{d^2\min(n,m)}{\epsilon}\right)$ queries.
\end{theorem}

\begin{theorem}\label{TH4}
There exists a non-adaptive one-sided error tester for $s$-binary rank of $n\times m$ $(0,1)$-matrices that makes $\tilde O\left(\frac{d^2\min(n,m)}{\epsilon^2}\right)$ queries.
\end{theorem}

We will assume, wlog, that $n\le m$. We first prove

Let $I_1,\ldots,I_d\subseteq [n]$ and $I\subseteq [n]$. We say that $\{I_i\}_{i\in [d]}$ {\it $s$-covers} $I$ if there is $D\subseteq [d]$ such that $I=\cup_{j\in D}I_j$ and each $x\in I$ belongs to at most $s$ sets in $\{I_j\}_{j\in D}$. For a column vector in $v\in \{0,1\}^n$, we write $I(v)=\{i|v_i=1\}$. Let $M$ be a $n\times m$ $(0,1)$-matrix. We say that $\{I_i\}_{i\in [d]}$ {\it $s$-covers} column $x$ in $M$ if it $s$-covers $I(M[[n],x])$.

\begin{lemma}
Let $M$ be a $n\times m$ $(0,1)$-matrix. Let $Y\subseteq [m]$ and suppose $M'=M[[n],Y]$ is of $s$-binary rank at most $d$. Let\footnote{Some rectangles can be empty.} $\{I_k\times J_k\}_{k\in [d]}$ be monochromatic rectangles in $M'$ that cover all the $1$-entries of $M'$ and covers each $1$-entry at most $s$ times. If $M$ is $\epsilon$-far from having binary rank at most $d$ then, for uniformly at random $y\in [m]$, with probability at least $\epsilon$, $\{I_k\}_{k\in [d]}$ is not an $s$-cover of column $y$ in $M$.  
\end{lemma}
\begin{proof}
Suppose, to the contrary, $\{I_k\}_{k\in [d]}$ does not cover less than $\epsilon m$ columns in $M$. Let $G$ be the matrix obtained by changing all such columns to $0$ columns. First, it is clear that $G$ is $\epsilon$-close to $M$. We now show that the $s$-binary rank of $G$ is $d$, and this gives a contradiction.

To this end, let $y\not\in Y$ be any column in $G$ that is not a zero column. Since $\{I_k\}_{k\in [d]}$ $s$-covers column $y$, there is $D_y\subseteq [d]$ such that $I(y)=\cup_{j\in D_y} I_j$ and every $k\in I(y)$ is covered by at most $s$ sets in $\{I_j\}_{j\in D_y}$. Define $J_k'=J_k\cup \{y|k\in D_y\}$. We now prove 
\begin{enumerate}
    \item $\{I_k\times J_k'\}_{k\in [d]}$ covers all the $1$-entries in $G$.
    \item Each $1$-entry in $G$ is covered by at most $s$ rectangles in $\{I_k\times J_k'\}_{k\in [d]}$.
\end{enumerate}
This implies that the $s$-binary rank of $G$ is at most $d$.  ....
\end{proof}
\begin{lemma}
Let $M$ be a $n\times m$ $(0,1)$-matrix. Let $Y\subseteq [m]$ and suppose $M'=M[[n],Y]$ is of $s$-binary rank at most $d$. If $M$ is of $s$-binary rank at most $d$, then there is a set $\{I_k\times J_k\}_{k\in [d]}$ of monochromatic rectangles in $M'$ that cover all the $1$-entries of $M'$ and covers each $1$-entry at most $s$ times, such that: For any $y\in [m]$, $\{I_k\}_{k\in [d]}$ $s$-covers column $y$ in $M$.  
\end{lemma}
}

\section{Testing the Exact $s$-Binary Rank}
We first prove the following.
\begin{lemma}\label{ChRC}
Let $M$ and $M'$ be $n\times m$ $(0,1)$-matrices that differ in one row (or column). Then $|\brank_s(M)-\brank_s(M')|\le 1$.
\end{lemma}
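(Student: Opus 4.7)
The plan is to prove $\brank_s(M') \le \brank_s(M) + 1$; the reverse inequality then follows by the symmetric argument (or by swapping the roles of $M$ and $M'$), yielding the absolute value bound. I will handle the case where $M$ and $M'$ differ in one row; the column case is entirely analogous by transposition.

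Let $d = \brank_s(M)$ and fix a witnessing family of monochromatic rectangles $\{I_k \times J_k\}_{k \in [d]}$ for $M$, where each $1$-entry of $M$ lies in at least one and at most $s$ of them. Suppose $M$ and $M'$ differ only in row $r \in [n]$. The construction is to restrict each rectangle to avoid $r$ and then add a single rectangle that captures the new row. Concretely, set $I_k' := I_k \setminus \{r\}$ for each $k \in [d]$, and let $J^* := \{j \in [m] : M'[r,j] = 1\}$. I claim $\{I_k' \times J_k\}_{k \in [d]} \cup \{\{r\} \times J^*\}$ is a valid $s$-cover of $M'$ of size at most $d+1$.

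The verification breaks into three routine checks. First, each $I_k' \times J_k$ is monochromatic in $M'$: on rows other than $r$ the matrices $M$ and $M'$ agree, and since $r \notin I_k'$ the entries of $M'$ on $I_k' \times J_k$ equal the entries of $M$ on $I_k' \times J_k$, which are all $1$. The rectangle $\{r\} \times J^*$ is monochromatic in $M'$ by the definition of $J^*$. Second, every $1$-entry of $M'$ is covered: for $(i,j)$ with $i \ne r$, the entry $M'[i,j] = M[i,j] = 1$ was covered by some $I_k \times J_k$ in the original decomposition, and since $i \ne r$ it remains covered by $I_k' \times J_k$; for $(r,j)$ a $1$-entry of $M'$, we have $j \in J^*$ so it lies in the new rectangle. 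Third, the coverage multiplicity is at most $s$: for $(i,j)$ with $i \ne r$, the $1$-entry is covered by exactly the same subset of modified rectangles as in the original (and not by $\{r\} \times J^*$), so the count is at most $s$; for $(r,j) \in \{r\} \times J^*$, it is covered by the new rectangle and by none of the $I_k' \times J_k$ (since $r \notin I_k'$), so it is covered exactly once, which is $\le s$ as long as $s \ge 1$.

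There is no real obstacle; the only thing to mind is that some of the modified rectangles $I_k' \times J_k$ may be empty (if $I_k = \{r\}$), but empty rectangles cause no harm and can simply be dropped, keeping the total count bounded by $d+1$. Combining with the symmetric inequality, we obtain $|\brank_s(M) - \brank_s(M')| \le 1$.
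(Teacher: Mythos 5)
Your proof is correct and is essentially the paper's argument in different notation: the paper performs the same construction via the factorization characterization of Lemma~\ref{DDD} (zeroing row $k$ of $N$ and appending a column $e_k$ to $N$ together with row $k$ of $M'$ to $L$), which corresponds exactly to your deletion of $r$ from every $I_k$ and the addition of the single rectangle $\{r\}\times J^*$. Your multiplicity and coverage checks match what the paper leaves implicit, so nothing is missing.
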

\begin{proof} 
Suppose $\brank_s(M)=d$ and $M'$ differ from $M$ in row $k$. Let $N$ and $L$ be $n\times d$ $(0,1)$-matrix and $d\times m$ $(0,1)$-matrix, respectively, such that $P=NL$, for every $(i,j)\in[n]\times [m]$, $P[i,j]\le s$, and $P[i,j]=0$ if and only if $M[i,j]=0$. Add to $N$ a column (as a $(d+1)$th column) that all its entries are zero except the $k$-th entry, which equals $1$. Then change $N[k,j]$ to zero for all $j\in [d]$. Let $N'$ be the resulting matrix. Add to $L$ another row (as a $(d+1)$th row) equal to the $k$-th row of $M'$. Let $L'$ be the resulting matrix. Let $P'=N'L'$. It is easy to see that $P'[i,j]=P[i,j]$ for all $i\not=k$ and $j$, and the $k$th row of $P'$ is equal to the $k$th row of $M'$. Then, for every $(i,j)\in[n]\times [m]$, $P'[i,j]\le s$, and $P'[i,j]=0$ if and only if $M'[i,j]=0$. Therefore, $\brank_s(M')\le d+1=\brank_s(M)+1$. In the same way, $\brank_s(M)\le \brank_s(M')+1$.
\end{proof}

\begin{lemma}\label{EEE} Let $\eta=d^2/(nm)$.
Let $M$ be $n\times m$ $(0,1)$-matrix. If $M$ is $\epsilon$-close to having $s$-binary rank at most $d$, then $M$ is $(\epsilon+\eta)$-close to having $s$-binary rank $d$. 
\end{lemma}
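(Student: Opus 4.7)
Let $M'$ achieve $\dist(M,M') \le \epsilon n m$ with $\br_s(M') \le d$, and set $d_0 := \br_s(M')$. If $d_0 = d$, take $M'' = M'$ and the conclusion is immediate. So assume $d_0 < d$ (and, implicitly, $n,m \ge d$, otherwise no $n\times m$ matrix can have rank exactly $d$ and the statement is vacuous in the other direction). The plan is to produce $M''$ by flipping entries of $M'$ one-at-a-time toward a carefully chosen target $M^*$ with $\br_s(M^*) \ge d$, and stopping the first time the rank reaches $d$. Lemma~\ref{ChRC} is the engine that makes this work: flipping a single entry is a special case of changing one row, so each flip shifts $\br_s$ by at most $1$.

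For the target, pick any $X \subseteq [n]$ and $Y \subseteq [m]$ with $|X|=|Y|=d$, say $X=\{i_1,\ldots,i_d\}$ and $Y=\{j_1,\ldots,j_d\}$, and define $M^*$ to agree with $M'$ outside $X\times Y$ while $M^*[i_k,j_\ell] = [k=\ell]$ on $X\times Y$. Then $M^*$ and $M'$ differ in at most $d^2$ entries, and $M^*[X,Y] = I_d$ is a sub-matrix of $M^*$. Since every $1$-entry of $I_d$ is the unique $1$ in its row and column, any monochromatic rectangle cover of $I_d$ must use at least $d$ rectangles; hence $\br_s(I_d) = d$. By sub-matrix monotonicity of $\br_s$ (for any factorization $M^*=NL$ we have $M^*[X,Y] = N[X,[d']]L[[d'],Y]$, so $\br_s(M^*[X,Y]) \le \br_s(M^*)$), we conclude $\br_s(M^*) \ge d$.

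Now order the $\Delta \le d^2$ differing positions arbitrarily as $p_1,\ldots,p_\Delta$, and form the sequence $N_0 = M', N_1, \ldots, N_\Delta = M^*$, where $N_{t+1}$ is $N_t$ with entry $p_{t+1}$ flipped. By Lemma~\ref{ChRC} applied to each single-entry change, $|\br_s(N_{t+1}) - \br_s(N_t)| \le 1$. The integer sequence $\br_s(N_0), \ldots, \br_s(N_\Delta)$ starts at $d_0 < d$, ends at $\br_s(N_\Delta) \ge d$, and changes by at most $1$ per step. Letting $t^* := \min\{t : \br_s(N_t) \ge d\}$, we get $\br_s(N_{t^*-1}) \le d-1$ and $\br_s(N_{t^*}) \le \br_s(N_{t^*-1}) + 1 \le d$, so $\br_s(N_{t^*}) = d$ exactly. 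Set $M'' := N_{t^*}$: then $\br_s(M'') = d$ and $\dist(M',M'') \le t^* \le d^2 = \eta n m$, hence $\dist(M,M'') \le \epsilon nm + d^2 = (\epsilon + \eta) nm$, as required. The only non-routine part of this plan is the discrete intermediate-value step at the end of this paragraph; the main conceptual point is to recognize that because Lemma~\ref{ChRC} applies to every single-entry flip, the rank cannot ``skip over'' the value $d$ along the path from $M'$ to $M^*$.
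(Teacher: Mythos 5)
Your proof is correct and follows essentially the same route as the paper: plant an identity block $I_d$ to force the rank up to at least $d$, then use Lemma~\ref{ChRC} together with a discrete intermediate-value argument to find an intermediate matrix of rank exactly $d$ within $d^2$ modifications. The only (immaterial) difference is that you interpolate entry-by-entry in at most $d^2$ steps, whereas the paper interpolates column-by-column in $d$ steps; both are valid instances of Lemma~\ref{ChRC}.
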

\begin{proof} 
We will show that for every $n\times m$ $(0,1)$-matrix $H$ of $s$-binary rank at most $d-1$, there is a $n\times m$ $(0,1)$-matrix $G$ of $s$-binary rank $d$ that is $\eta$-close to $H$. Therefore, if $M$ is $\epsilon$-close to having $s$-binary rank at most $d$, then it is $(\epsilon+\eta)$-close to having $s$-binary rank $d$. 

Define the $n\times m$ $(0,1)$-matrices $G_k$, $k\in [d]\cup\{0\}$, where $G_0=H$ and for $k\ge 1$, $G_k[i,j]=H[i,j]$ if $j>k$ or $i>d$, and $G_k[[d],[k]]=I_d[[d],[k]]$ where $I_d$ is the $d\times d$ identity matrix. Since $G_d[[d],[d]]=I_d$, we have $\br_s(G_d)\ge d$. It is clear that for every $k\in [d]\cup \{0\}$, $G_k$ is $(d^2/nm)$-close to $H$. If $\br_s(G_d)= d$, then take $G=G_d$, and we are done. Otherwise, suppose $\br_s(G_d)>d$. 

Now consider a sequence $H=G_0,G_1,G_2,\ldots,G_d$. By Lemma~\ref{ChRC}, we have $\brank_s(G_{i-1})-1\le \brank_s(G_i)\le \brank_s(G_{i-1})+1$. Now since $\brank_s(G_0)=\brank_s(H)\le d-1$ and $\brank_s(G_d)>d$, by the discrete intermediate value theorem, there must be $k\in [d]$ such that $\brank_s(G_k)=d$. Then take $G=G_k$, and we are done.
\end{proof}

Now, the tester for testing the $s$-binary rank $d$ runs as follows. If $mn<2d^2/\epsilon$, then find all the entries of $M$ with $mn<2d^2/\epsilon$ queries. If $\br_s(M)=d$, then accept. Otherwise, reject.
If $mn\ge 2d^2/\epsilon$, then run {\bf Adaptive-Test-Rank}$(d,s,M,n,m,\epsilon/2)$ (for the non-adaptive, we run {\bf Non-Adaptive-Test-Rank}$(d,s,M,n,m,\epsilon/2)$) and output its answer.

We now show the correctness of this algorithm. If $M$ is of $s$-binary rank $d$, then it is of $s$-binary rank at most $d$, and the tester accepts.

Now, suppose $f$ is $\epsilon$-far from having $s$-binary rank $d$. If $mn<2d^2/\epsilon$, the tester rejects. If $mn\ge 2d^2/\epsilon$, then, by Lemma~\ref{EEE}, $f$ is $(\epsilon-\eta)$-far from  having $s$-binary rank at most $d$, where $\eta=d^2/(nm)$. Since $\eta=d^2/(nm)\le \epsilon/2$, the function $f$ is $(\epsilon/2)$-far from  having $s$-binary rank at most $d$, and therefore the tester, with probability at least $2/3$, rejects.

\bibliography{TestingRef}

\begin{thebibliography}{1}

\bibitem{BalcanLW019}
Maria{-}Florina Balcan, Yi~Li, David~P. Woodruff, and Hongyang Zhang.
\newblock Testing matrix rank, optimally.
\newblock In {\em Proceedings of the Thirtieth Annual {ACM-SIAM} Symposium on
  Discrete Algorithms, {SODA} 2019, San Diego, California, USA, January 6-9,
  2019}, pages 727--746, 2019.
\newblock \href {https://doi.org/10.1137/1.9781611975482.46}
  {\path{doi:10.1137/1.9781611975482.46}}.

\bibitem{ChalermsookHHK14}
Parinya Chalermsook, Sandy Heydrich, Eugenia Holm, and Andreas Karrenbauer.
\newblock Nearly tight approximability results for minimum biclique cover and
  partition.
\newblock In Andreas~S. Schulz and Dorothea Wagner, editors, {\em Algorithms -
  {ESA} 2014 - 22th Annual European Symposium, Wroclaw, Poland, September 8-10,
  2014. Proceedings}, volume 8737 of {\em Lecture Notes in Computer Science},
  pages 235--246. Springer, 2014.
\newblock \href {https://doi.org/10.1007/978-3-662-44777-2\_20}
  {\path{doi:10.1007/978-3-662-44777-2\_20}}.

\bibitem{DanaPC}
Dana Ron.~Private Communication.

\bibitem{GregoryPJL91}
David~A. Gregory, Norman~J. Pullman, Kathryn~F. Jones, and J.~Richard Lundgren.
\newblock Biclique coverings of regular bigraphs and minimum semiring ranks of
  regular matrices.
\newblock {\em J. Comb. Theory, Ser. {B}}, 51(1):73--89, 1991.
\newblock \href {https://doi.org/10.1016/0095-8956(91)90006-6}
  {\path{doi:10.1016/0095-8956(91)90006-6}}.

\bibitem{LiWW14}
Yi~Li, Zhengyu Wang, and David~P. Woodruff.
\newblock Improved testing of low rank matrices.
\newblock In {\em The 20th {ACM} {SIGKDD} International Conference on Knowledge
  Discovery and Data Mining, {KDD} '14, New York, NY, {USA} - August 24 - 27,
  2014}, pages 691--700, 2014.
\newblock \href {https://doi.org/10.1145/2623330.2623736}
  {\path{doi:10.1145/2623330.2623736}}.

\bibitem{NakarR18}
Yonatan Nakar and Dana Ron.
\newblock On the testability of graph partition properties.
\newblock In Eric Blais, Klaus Jansen, Jos{\'{e}} D.~P. Rolim, and David
  Steurer, editors, {\em Approximation, Randomization, and Combinatorial
  Optimization. Algorithms and Techniques, {APPROX/RANDOM} 2018, August 20-22,
  2018 - Princeton, NJ, {USA}}, volume 116 of {\em LIPIcs}, pages 53:1--53:13.
  Schloss Dagstuhl - Leibniz-Zentrum f{\"{u}}r Informatik, 2018.
\newblock \href {https://doi.org/10.4230/LIPIcs.APPROX-RANDOM.2018.53}
  {\path{doi:10.4230/LIPIcs.APPROX-RANDOM.2018.53}}.

\bibitem{ParnasRS21}
Michal Parnas, Dana Ron, and Adi Shraibman.
\newblock Property testing of the boolean and binary rank.
\newblock {\em Theory Comput. Syst.}, 65(8):1193--1210, 2021.
\newblock \href {https://doi.org/10.1007/s00224-021-10047-8}
  {\path{doi:10.1007/s00224-021-10047-8}}.

\bibitem{Sgall99}
Jir{\'{\i}} Sgall.
\newblock Bounds on pairs of families with restricted intersections.
\newblock {\em Comb.}, 19(4):555--566, 1999.
\newblock \href {https://doi.org/10.1007/s004939970007}
  {\path{doi:10.1007/s004939970007}}.

\end{thebibliography}

\newpage
\ignore{\section*{Appendix}
In this Appendix we prove Sgall Lemma~\ref{kkeeyy}. We first prove

\begin{lemma}\label{empty}
Let $\A,\B\subseteq 2^{[d]}$ be such that for every $A\in \A$ and $B\in \B$, $A\cap B=\O$. Then the are two sets $X,Y\subset [d]$ such that $X\cap Y=\O$, $\A\subseteq 2^X$ and $\B\subseteq 2^Y$. In particular,
$|\A||\B|\le 2^d.$
\end{lemma}
\begin{proof}
Consider $X=\cup_{A\in \A}A\subseteq [d]$ and $Y=\cup_{B\in \B}B\subseteq [d]$. Then $X\cap Y=\cup_{A\in \A,B\in B}A\cap B=\O$,$\A\subseteq 2^X$ and $\B\subseteq 2^Y$. Also, $|\A||\B|\le 2^{|X|}2^{|Y|}\le 2^d$.
\end{proof}

We are now redy to prove Lemma~\ref{kkeeyy}.
\begin{proof}
We prove the result by induction on $d$. For $d=1$, $\A,\B\subseteq \{\{1\},\O\}$ and therefore $|\A||\B|\le 4=(d+1)2^d$. Assume that it is true for $k<d$. We now prove it for $k=d$.

Consider $\A_0=\{A\in \A|d\not\in A\}$, $\A_1=\A\backslash \A_0$, $\B_0=\{B\in \B|d\not\in B\}$, $\B_1=\B\backslash \B_0$. Since $\A_0,\B_0\subseteq 2^{[d-1]}$, by the induction hypothesis \begin{eqnarray}\label{U1}
|\A_0||\B_0|\le d2^{d-1}.
\end{eqnarray}

Notice that for every $A\in \A_1$ and $B\in \B_1$ we have $A\cap B=\{d\}$. Consider $\A_1^-=\{A\backslash \{d\}|A\in \A_1\}$ and $\B_1^-=\{B\backslash \{d\}|B\in \B_1\}$. Then for every $A\in \A_1^-$ and $B\in \B_1^-$ we have $A\cap B=\O$. By Lemma~\ref{empty}, there are $X,Y\subseteq [d-1]$ such that $X\cap Y=\O$, $\A_1^-\subseteq 2^X$ and $\B_1^-\subseteq 2^Y$. Also, \begin{eqnarray}\label{U2}
|\A_1||\B_1|=|\A_1^-||\B_1^-|\le 2^{|X|}2^{|Y|}\le 2^{d-1}. 
\end{eqnarray}
Now, consider $\A_0^Y=\{A\cap Y|A\in \A_0\}$. Since $\A_0\subseteq \{A'\cup A''| A'\in 2^{[d-1]\backslash Y},A''\in \A_0^Y\}$, we have $|\A_0|\le 2^{d-1-|Y|}|\A_0^Y|$. Since for every $A\in \A_0^Y\subseteq 2^Y$ and $B\in \B_1^-\subseteq 2^Y$ we have $|A\cap B|\le 1$, by the induction hypothesis, we have $|\A_0^Y||\B_1^-|\le (|Y|+1)2^{|Y|}$. Therefore,
\begin{eqnarray}\label{U3}
|\A_0||\B_1|\le 2^{d-1-|Y|}|\A_0^Y||\B_1^-|\le (|Y|+1)2^{d-1}. 
\end{eqnarray} 
In the same way 
\begin{eqnarray}\label{U4}
|\A_1||\B_0|\le  (|X|+1)2^{d-1}. 
\end{eqnarray} 
Now, by (\ref{U1})-(\ref{U4}) we get 
\begin{eqnarray*}
|\A||\B|&=&(|\A_0|+|\A_1|)(|\B_0|+|\B_1|)\\
&=&|\A_0||\B_0|+|\A_1||\B_0|+|\A_0||\B_1|+|\A_1||\B_1|\\
&\le& d2^{d-1}+(|Y|+1)2^{d-1}+(|X|+1)2^{d-1}+2^{d-1}\\
&\le & (d+1)2^d.
\end{eqnarray*}
\end{proof}}

\ignore{\section*{Expressions}
********

{\bf Far} far from having binary rank at most $d$; 

{\bf New} A new row to... Not a new row to...}

\end{document}